\documentclass[lettersize,journal]{IEEEtran}

\IEEEoverridecommandlockouts

\usepackage{cite}
\usepackage{amsmath,amssymb,amsfonts}
\usepackage{amsthm}
\usepackage{bbm}
\usepackage{xcolor}
\usepackage{graphicx}
\usepackage{subcaption}
\usepackage{enumitem}
\usepackage{algorithm}
\usepackage{algorithmic}
\usepackage{tikz}
\usetikzlibrary{calc}

\newtheoremstyle{nonitalic}
{3pt}
{3pt}
{\normalfont}
{}
{\bfseries}
{.}
{.5em}
{}
\theoremstyle{nonitalic}

\newtheorem{theorem}{Theorem}
\newtheorem{lemma}{Lemma}
\newtheorem{remark}{Remark}

\newcommand\numberthis{\addtocounter{equation}{1}\tag{\theequation}}

\begin{document}

\title{\huge When Fresh Information Is Not Enough: Age-of-Actuation-Optimal Scheduling for\\ Energy-Harvesting Actuators}

\author{Ali Nikkhah, \IEEEmembership{Graduate Student Member, IEEE}, Anthony Ephremides, \IEEEmembership{Life Fellow, IEEE}, and Nikolaos Pappas, \IEEEmembership{Senior Member, IEEE}
\thanks{A. Nikkhah and N. Pappas are with the Department of Computer and Information Science, Linköping University, Linköping, Sweden, email: \{\texttt{ali.nikkhah, nikolaos.pappas\}@liu.se}. A. Ephremides is with the Electrical and Computer Engineering, University of Maryland, College Park, MD, USA, email: \texttt{etony@umd.edu}. \\ This work has been supported in part by the ELLIIT, and the European Union (6G-LEADER, 101192080). Part of this work has been submitted to the 2027 IEEE Information Theory Workshop (ITW)~\cite{nikkhah2027itw}.}
}

\maketitle
\begin{abstract}
We study transmission scheduling in an energy-harvesting actuation system where an actuation requires both a received update and available energy. While age of information (AoI) measures information freshness, age of actuation (AoA) measures the time elapsed since the latest completed actuation. The transmitter observes the receiver's \textcolor{black}{AoA} and battery state and decides whether to transmit a fresh update in each slot. We formulate average-AoA minimization as a Markov decision process \textcolor{black}{and establish} an AoA-threshold property for an optimal policy\textcolor{black}{. We also} derive the average AoA \textcolor{black}{in closed form} under the \textcolor{black}{AoI-optimal policy, which is always-transmit}. Numerical results show that AoA-aware scheduling can improve actuation freshness while reducing the number of transmissions, although information freshness may deteriorate. \textcolor{black}{The two policies coincide for some pairs of channel success and energy-arrival probabilities.} For a one-unit battery, the transmission rate \textcolor{black}{under the AoA-optimal policy} can vary nonmonotonically with the energy-arrival probability.
\end{abstract}

\section{Introduction} \label{Introduction}
\textcolor{black}{Timeliness is a key requirement in many communication systems, such as remote monitoring and networked control.
The most widely used metric for it is the Age of Information (AoI).
It was originally proposed in \cite{kaul2012real} and has since been analyzed in numerous communication systems \cite{yates2021age,pappas2023age}.
It is defined, at a destination, as the time elapsed since the generation of the freshest data available at that destination \cite{pappas2023age}, i.e., $I(t)=t-u(t)$, where $u(t)$ denotes the timestamp of the freshest received data, at or before time $t$. 
Unlike delay, it captures both how often updates are sent and how long they take to arrive.}

\textcolor{black}{However, AoI stops at reception.
It implicitly assumes that a received update is immediately useful.
In many systems, this is not the case.
The received data is used to trigger an action at the destination \cite{kountouris2021semantics}, and this action needs more than the data itself: a \textit{possibility}, an enabling condition to put data into use.
For example, an actuator powered by harvested energy can act only if its battery holds enough energy.
An update that arrives when the battery is empty resets the AoI, yet no action takes place.
The destination then holds fresh information, but its latest action is old.
Moreover, a policy that minimizes the AoI may spend transmissions on updates that never translate into actions.
Therefore, the freshness of the received data alone does not characterize the freshness of the actuations.}

Extensions of AoI have been proposed to capture semantic \textcolor{black}{attributes} and goal-oriented aspects of communication \textcolor{black}{\cite{uysal2022semantic,kountouris2021semantics}}, such as nonlinear aging costs \cite{kosta2020thecost}, version changes \cite{yates2021theage}, incorrect information \cite{maatouk2020theage}, and request- or query-based relevance \cite{yin2019onlythose,chiariotti2022query}.
\textcolor{black}{The closest to actuation is} Age of Loop (AoL), \textcolor{black}{which} extends AoI to closed-loop control by measuring the time since the generation of the state or command behind the latest applied action \cite{desantana2021age}.
\textcolor{black}{However, AoL assumes that actuation occurs periodically, with a fixed cycle.
It does not model actuation as a constrained possibility that depends on a random process, such as the availability of energy.}

\textcolor{black}{To capture this,} we proposed Age of Actuation (AoA) \cite{nikkhah2026age} and later Age of Actuated Information (AoAI) \cite{nikkhah2026semantics} to go one step beyond AoI and analyze the broader case in which \textit{data} requires a certain \textit{possibility} (e.g., physical or admission-related) at the destination to be effectively utilized, and to assess the freshness of this final outcome.
We refer to this concept as \textit{actuation}.
AoA is defined as the time elapsed since the most recent actuation at the destination\textcolor{black}{; its formal definition, together with that of the AoI, is given in Section~\ref{AoIandAoA}}.

\subsection{Related Work} \label{Related Work}
In a partial categorization, the analysis and optimization of AoI and, more generally, timeliness metrics can be divided into two main approaches: \textcolor{black}{randomized (state-independent) policies and dynamic (state-dependent) policies.}

\subsubsection{\textcolor{black}{Randomized Policies}} \label{state-independent randomized policies}
They typically enable tractable analytical characterization, often modeling the system as an uncontrolled Markov chain, and can be interpreted as open-loop control.
The decision policy is modeled as one or more random variables (RVs), usually Bernoulli RVs.
After characterizing the system dynamics, one typically derives a closed-form expression for the timeliness metric (e.g., average AoI) as a function of the corresponding probability parameters of these RVs.
In this approach, the policy does not depend on the system state; instead, at each time slot, decisions are made by sampling the RVs independently of the current system conditions \textcolor{black}{\cite{nikkhah2026age,nikkhah2026semantics,xiao2024infromation,zhao2025age,fountoulakis2022information}}. In our previous works, we utilized this approach \textcolor{black}{\cite{nikkhah2026age,nikkhah2026semantics}.}

\subsubsection{\textcolor{black}{Dynamic Policies}}
They incorporate feedback from the system state, with decisions made at each time slot based on the current state of the system, such as age, channel conditions, and resource availability, thereby corresponding to closed-loop control.
After decisions are made, the system transitions probabilistically to a new state according to its underlying dynamics.
This sequential decision-making and the stochastic nature of the system lead to the formulation of the problem as a Markov Decision Process (MDP), which provides a principled framework through the definition of states, actions, transition probabilities, and instant costs, typically correlating the cost with the age and minimizing the long-term cost which corresponds to the average AoI or related timeliness metrics.
\textcolor{black}{When the transition probabilities and instant costs are known, optimal policies} can be computed using techniques such as value iteration, relative value iteration, or policy iteration, which rely on recursively solving the Bellman optimality equations.
\textcolor{black}{This approach has been applied to AoI minimization in energy harvesting systems with multiple sources \cite{gindullina2021age}, wireless powered systems with joint sampling and updating \cite{abdelmagid2020aoi}, and wireless broadcast networks \cite{hsu2020scheduling}, and to version AoI in energy harvesting and caching systems \cite{salimnejad2025version,delfani2025version}.
AoI with an energy harvesting source has also been widely studied \cite{yates2015lazy,bacinoglu2015age,wu2018optimal,arafa2020age,krikidis2019average}.
Transmitting as soon as possible is not always age-optimal \cite{sun2016update}, and this also holds under energy harvesting \cite{yates2015lazy}.
The optimal policy has a threshold structure in the age, shown in discrete time \cite{bacinoglu2015age}, for a unit battery \cite{wu2018optimal}, and for a finite battery with thresholds that decrease with the stored energy \cite{arafa2020age}.
The average AoI of the greedy policy has also been derived in closed form \cite{krikidis2019average}.
In all these works, the optimized metric measures the freshness of the received data, not of its use.
Moreover, in \cite{yates2015lazy,bacinoglu2015age,wu2018optimal,arafa2020age,krikidis2019average}, energy is consumed by the transmission itself.
In our model, energy is instead needed at the destination for actuation.
As a result, the AoI-optimal policy is to always transmit, and the threshold structure appears in the AoA-optimal policy instead.}

State-dependent policies have been used to optimize the freshness of received data, but not the freshness of actuation.
On the other hand, the AoA has so far been studied only under state-independent randomized policies \cite{nikkhah2026age,nikkhah2026semantics}.
Thus, it is not known what an AoA-optimal state-dependent policy looks like, and whether and when it differs from the AoI-optimal policy.
This paper addresses this gap.

\subsection{Contribution}
In this work, we consider a system model that involves state-dependent decision-making to transmit data. We analyze the average AoA and the average AoI under both AoA-Optimal (AoA-Opt.) policy and AoI-Optimal (AoI-Opt.) policy. We thoroughly compare the outcomes and illustrate the trade-offs in timeliness and cost from the perspective of greedy and non-greedy regions, and show how the AoA captures the freshness of completed actuations and can therefore lead to a different optimal transmission policy from the AoI. The main contributions of the paper are as follows:
\begin{itemize}
    \item We characterize the AoA as an average-cost-per-stage problem and derive the optimal average AoA and the corresponding optimal policies using Relative Value Iteration (RVI).

    \item We investigate structural properties of the MDP and prove \textcolor{black}{an AoA-threshold structure for the AoA-Opt.\ policy}, while such a structure for the battery state is observed only numerically.

    \item We characterize the AoA for the AoI-Opt.\ \textcolor{black}{(always-transmit)} policy using a Markov chain formulation, derive its steady-state behavior, and obtain a closed-form expression for the average AoA under \textcolor{black}{this} policy. \textcolor{black}{We also establish the ordering among the four values of the average AoA and the average AoI under the two policies.}

    \item We present and analyze two sets of operating regions, obtained by comparing the AoA-optimal and AoI-optimal policies, along with the resulting optimal and suboptimal values of the average AoA and the average AoI.

    \item We \textcolor{black}{present counterintuitive} system behaviors that emerge at very low battery capacity, namely a battery capacity of one\textcolor{black}{, such as a transmission rate that varies nonmonotonically with the energy-arrival probability.}
\end{itemize}

\begin{figure}[t]
\centering
\includegraphics[width=\linewidth]{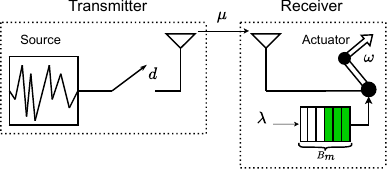}
\caption{The considered system model, where successful data reception and energy availability jointly trigger actuation.}
\label{fig:System Model}
\end{figure}

\section{System Model} \label{sec:System Model}
\textcolor{black}{In this section, we first describe the system dynamics, i.e., the interaction of data transmission, channel success, energy arrivals, the battery, and the actuator over time, and then define the considered timeliness metrics, the AoI and the AoA.}
\textcolor{black}{\subsection{System Dynamics} \label{System Dynamics}}
Fig. \ref{fig:System Model} illustrates the considered system model.
Time is divided into discrete slots of equal length.
At the transmitter side, there is a time-varying information source containing the data required for actuation.
\textcolor{black}{At the beginning of} each time slot, the transmitter decides whether or not to transmit the data.
\textcolor{black}{This decision is based on the state of the system at the end of the previous time slot.}
\textcolor{black}{The transmission of data} entails sampling the source, generating a packet of information, and transmitting it.
\textcolor{black}{This} can also be regarded as issuing a command to actuate.
We denote the transmission action by $d(k) \in D=\{0,1\}$, where $d(k)=0$ denotes no transmission and $d(k)=1$ denotes transmission, in time slot $k$.
Given that transmission occurs, the channel successfully delivers the packet to the receiver with probability (w.p.) $\mu$ and fails to do so w.p. $\bar{\mu}=1-\mu$.
We express its related random variable (RV) by $M(k)\textcolor{black}{\in}\{0,1\}$ where $M(k)=0$ denotes the event of failure of the channel to deliver the transmitted data packet and $M(k)=1$ denotes the event of success in delivering the data packet, in time slot $k$.
At the receiver's end, there is an actuator that actuates based on the data it receives from the source.
This actuation requires the presence of both \textit{data} and \textit{energy}.
The energy is provided from a battery that stores discrete energy units, with a maximum capacity of $B_m$.
Each energy unit is sufficient to power one actuation.
An energy unit can be utilized within the same time slot in which it is harvested.
The battery harvests one energy unit w.p. $\lambda$, in each time slot.
\textcolor{black}{If the battery is full, a newly harvested unit is discarded unless an actuation in the same slot frees one unit of capacity.}
We define the RV representing energy arrival (harvesting) as $\Lambda(k) \in \{0,1\}$, where $\Lambda(k)=0$ denotes no energy arrival and $\Lambda(k)=1$ denotes the arrival of one energy unit in time slot $k$.
We denote the battery state by $B(k) \in \{0,1,2,\dots,B_{m}\}$, which represents the number of energy units stored in the battery at the end of time slot $k$.

The following relation describes the evolution of the battery state at the receiver side based on the dynamics discussed so far\footnote{We refrain from writing the first case as $\max\{B(k-1)-1,0\}$, because when $\omega(k)=1$ and $\Lambda(k)=0$, the battery could not have been empty (i.e., $B(k-1)\neq 0$) and must have had at least one energy unit.}
\begin{equation} \label{Battery_Evolution}
B(k)=
\begin{cases}
B(k-1)-1  & \omega(k)=1 \land \Lambda(k)=0,\\
B(k-1) & (\omega(k)=0 \land \Lambda(k)=0) \lor \\
&(\omega(k)=1 \land \Lambda(k)=1),\\
\min\{B(k-1)+1,\\
\quad \ \ \ B_m\} & \omega(k)=0 \land \Lambda(k)=1,
\end{cases}
\end{equation}
where $\omega(k) \in \Omega=\{0,1\}$ and $\omega(k)=0$ denotes the absence of an actuation and $\omega\textcolor{black}{(k)}=1$ denotes the occurrence of an actuation in \textcolor{black}{time slot} $k$.

The actuator performs an actuation in a given time slot if the transmitter transmits a data packet, the channel successfully delivers it, and either the battery is not empty or, if the battery is empty, it harvests one energy unit within the same time slot. It can be expressed as
\begin{equation}  \label{Actuation_Condidtion_S12}
\omega(k)=
\begin{cases}
1 &   (d(k)=1) \land (M(k)=1) \land \\ 
&( (B(k-1) \neq 0) \lor (\Lambda(k)=1) ),\\
0 & \text{otherwise}.
\end{cases}
\end{equation}

\textcolor{black}{
\begin{remark}
The precise chronology of events and measurements within a time slot is as follows:
At the beginning of each time slot, the action $d$, whether to transmit data, is taken, based on the system state (age and battery level) from the end of the previous time slot. Then the channel succeeds or fails to deliver the data packet, $M$; an energy arrival occurs or not, $\Lambda$; and, if possible, an actuation is accomplished, $\omega$. All the aforementioned events occur in rapid succession, with a negligible time difference relative to the slot length, while preserving their causal order. At the end of the time slot, the battery level $B$ and the age $A$ are measured, and together they constitute the state for this time slot.
\end{remark}
}

\textcolor{black}{
\begin{remark}
The state-dependent policy requires knowledge of the current state when taking any action. In our system model, at the beginning of each time slot, the transmitter's action \textit{depends} on the AoA and the battery level at the receiver from the end of the previous time slot. Therefore, we assume there is a separate noiseless \textit{feedback} link from the receiver to the transmitter.
\end{remark}
 }

\subsection{Considered Timeliness Metrics: AoI and AoA} \label{AoIandAoA}
We defined AoI and AoA, in general, in Section \ref{Introduction}. Here we define them in more detail corresponding to this paper.

For a discrete-time and no-queuing-before-reception model, as considered in this paper, where packets experience no delay, and each received update is the freshest possible at the time of reception and a potential actuation, the evolution of the AoI and AoA are given by
\begin{equation} \label{AoI_Evolution}
I(k)=
\begin{cases}
1 & v_{d}(k)=1,\\
I(k-1)+1 & v_{d}(k)=0,
\end{cases}
\end{equation}
and
\begin{equation}
A(k)=
\begin{cases}
1 & \omega(k)=1,\\
A(k-1)+1 & \omega(k)=0,
\end{cases}
\end{equation}
respectively, where $v_{d}(k) \in \mathcal{V}_{d} = \{0,1\}$, with $v_{d}(k)=1$ indicating successful reception of data and $v_{d}(k)=0$ indicating no data reception, at time slot $k$, defined as
\begin{equation} \label{Data_Reception}
v_{d}(k)=
\begin{cases}
1 & d(k)=1 \land M(k)=1,\\
0 & \text{otherwise}.
\end{cases}
\end{equation}
An actuation requires both \textit{data} and the \textit{possibility}; in this work, the possibility corresponds to the availability of energy. Thus, for the actuation we have
\begin{equation} \label{Actuation_Considered_Metrics}
\omega(k)=
\begin{cases}
1 & v_{d}(k)=1 \land v_{e}(k)=1,\\
0 & \text{otherwise},
\end{cases}
\end{equation}
where $v_{e}(k) \in \mathcal{V}_{e} = \{0,1\}$, with $v_{e}(k)=0$ denoting the absence of energy, and $v_{e}(k)=1$ denoting the availability of energy in time slot $k$, defined as
\begin{equation} \label{Energy_Availability}
v_{e}(k)=
\begin{cases}
1 & B(k-1) \textcolor{black}{\geq 1} \lor \Lambda(k)=1,\\
0 & B(k-1)=0 \land \Lambda(k)=0.
\end{cases}
\end{equation}
Note that placing (\ref{Data_Reception}) and (\ref{Energy_Availability}) into (\ref{Actuation_Considered_Metrics}) results in (\ref{Actuation_Condidtion_S12}).
 
\section{Age of Actuation (AoA) Optimization} \label{Age of Actuation (AoA) Optimization}
In this section, we characterize a standard Markov decision process (MDP) problem to optimize the average AoA, defined in Section \ref{AoIandAoA}, and obtain its optimal values and related optimal policies via the relative value iteration (RVI) algorithm. At the end of this section, we discuss the average AoI and two other averages under the AoA-Opt. policy in a remark.
\textcolor{black}{By ${v}^{*,A}$ (or simply ${v}^{*}$)\footnote{We reserve the unspecified superscript for the AoA-optimal policy, since the primary focus of this paper is on optimizing the average AoA.}, we denote the value of $v$ under the AoA-Opt. policy.}

\subsection{MDP Characterization} \label{MDP Characterization}
The MDP to characterize the dynamics of the problem has the following components:

\noindent
\textbf{States}: The state space of the problem is $S$, whose elements are $s_{(A,B)}(k)=(A(k),B(k))$, where \textcolor{black}{$A(k) \in \{1,2,\dots,A_m\}$} denotes the AoA\footnote{\textcolor{black}{Originally, we have $A(k) \in \mathbb{N}$; however, for practically solving the Bellman equation of the MDP problem via the relative value iteration (RVI) algorithm, we must truncate it to a finite range. A sufficiently large value of $A_m$ is chosen so that it effectively represents infinity.}}, and $B(k) \in \{0,1,2,\dots,B_m\}$ denotes the number of energy units available in the battery, both measured at the end of time slot $k$. \textcolor{black}{Thus, the total number of states is $|S| = A_{m} \times (B_{m}+1)$.} We use $A_{s}$ and $B_{s}$ to denote the AoA and battery level associated with state $s$, respectively.

\noindent
\textbf{Actions}: The action space is $d_{s}\in D=\{0,1\}, \forall s \in S$, where $d=0$ implies no transmission and $d=1$ implies transmission.

\noindent
\textbf{Instant Cost}:
We define $c_{s,d}=c_{s_{(A,B)},d}$ as the instant cost incurred when taking action $d$ in state $s$ and define it as
\begin{equation} \label{cost_for_notrans_and_trans_red}
c_{s_{(A,B)},d}=c_{s_{(A,B)}}=A.
\end{equation}
That is, the instant cost of taking any action $d$ in state $s_{(A,B)}$ equals the AoA of that state, $A_s$\textcolor{black}{, and is therefore independent of the action}.

\noindent
\textbf{System Knowledge (dynamics)}: The joint system evolution is described for three battery levels. If $B(k)=0$, the corresponding relations are
\begin{equation} \label{equ:Sys_dynamics_B=B_0}
\begin{aligned}
s_{A,B}(k+1)=
\begin{cases}
(1,0) &   \ d(k+1)=1 , \text{w.p.} \ w,\\
(\textcolor{black}{A^{\text{tr}}(k)},0) &  \  d(k+1)=1 , \text{w.p.} \ \bar{\lambda},\\
(\textcolor{black}{A^{\text{tr}}(k)},1) & \ d(k+1)=1 , \text{w.p.} \ y,\\
(\textcolor{black}{A^{\text{tr}}(k)},1) &  \ d(k+1)=0 , \text{w.p.} \ \lambda,\\
(\textcolor{black}{A^{\text{tr}}(k)},0) &  \ d(k+1)=0 , \text{w.p.} \ \bar{\lambda},\\
\end{cases}
\end{aligned}
\end{equation}
if $0<B(k)<B_m$, they are
\begin{equation} \label{equ:Sys_dynamics_0<B<B_m}
\begin{array}{l}
s_{A,B}(k+1)= \\[2pt]
\begin{cases}
(1,B(k)) &   \ d(k+1)=1 , \text{w.p.} \ w,\\
(1,B(k)-1) &\  d(k+1)=1 , \text{w.p.} \ x,\\
(\textcolor{black}{A^{\text{tr}}(k)},B(k)) &   \ d(k+1)=1 , \text{w.p.} \ z,\\
(\textcolor{black}{A^{\text{tr}}(k)},B(k)+1) &  \ d(k+1)=1 , \text{w.p.} \ y,\\
(\textcolor{black}{A^{\text{tr}}(k)},B(k)+1) &  \ d(k+1)=0 , \text{w.p.} \ \lambda,\\
(\textcolor{black}{A^{\text{tr}}(k)},B(k)) &  \ d(k+1)=0 , \text{w.p.} \ \bar{\lambda},\\
\end{cases}
\end{array}
\end{equation}
and if $B(k)=B_m$, they are
\begin{equation} \label{equ:Sys_dynamics_B=B_m}
\begin{aligned}
s_{A,B}(k+1)=
\begin{cases}
(1,B_{m}) &    \ d(k+1)=1 , \text{w.p.} \ w,\\
(1,B_{m}-1) &  \  d(k+1)=1 , \text{w.p.} \ x,\\
(\textcolor{black}{A^{\text{tr}}(k)},B_{m}) &  \ d(k+1)=1 , \text{w.p.} \ \bar{\mu},\\
(\textcolor{black}{A^{\text{tr}}(k)},B_{m}) &  \ d(k+1)=0 , \text{w.p.} \ 1,\\
\end{cases}
\end{aligned}
\end{equation}
\textcolor{black}{where $A^{\text{tr}}(k)=\min\{A(k)+1,A_{m}\}$, reflecting the age truncation used to keep the MDP problem finite-sized}\textcolor{black}{, and $w=\mu\lambda$, $x=\mu\bar{\lambda}$, $y=\bar{\mu}\lambda$, and $z=\bar{\mu}\bar{\lambda}$}.

\subsection{Problem Formulation}
\noindent
The average AoA is
\begin{equation} \label{equ:Average_AoA_MDP}
\bar{A}^{\pi}=\limsup_{N \rightarrow \infty} \frac{1}{N+1} \sum_{k=0}^{N} \mathbb{E}_{P}^{\pi}[A(k)|s_{(A(0),B(0))}],
\end{equation}
for the initial state\footnote{In Theorem~\ref{theo:WAUniErg}, we prove that under the AoA-Opt. policy, the optimal solution is independent of and achieved from any initial state.} $s_{(A(0),B(0))}$.
The expectation is taken over the system's inherent stochasticity, $P$ (i.e., system knowledge), under the policy $\pi$ (i.e., action taken by the transmitter).
A policy $\pi$ is defined as a mapping from the state space to the action space.
The policy specifies the action to be taken at each state
\begin{equation}
\pi = \{d_s\}, \forall s,
\end{equation}
and $\Pi$ is the set of all possible policies.
Moreover, $P$ denotes the set of all probabilities in the system
\begin{equation}
P=\{p_{s,d_s,s'}\}, \ \forall s,s' \in S, \ d_s \in D,
\end{equation}
which is referred to as system knowledge, where $p_{s,d,s'}$ denotes the probability of transitioning to state $s'$ given that the system is in state $s$ and decision $d$ is taken. We define
\begin{equation}
\pi^{*}=\arg \min_{\pi \in \Pi} \bar{A}^{\pi}, \ \ \bar{A}^{*}=\min_{\pi \in \Pi} \bar{A}^{\pi}
\end{equation}
as the AoA-Opt. policy, $\pi^{*}$, minimizing the average AoA and the corresponding achieved optimal average AoA, respectively.

The combination of stochasticity and decision-making leads to an MDP.
We also define the transmission rate under the AoA-Opt. policy by
\begin{align*} 
\bar{D}^{*,A}&=\limsup_{N \rightarrow \infty} \frac{1}{N+1} \sum_{k=0}^{N} \mathbb{E}_{P}^{\pi^*}[\mathbbm{1}_{\{d_{s(k)}^{*}=1\}}|s_{(A(0),B(0))}]\\
&=\sum_{s \in S}\text{Pr}\{s_{d_{s}^{*}=1}\}, \numberthis \label{equ:active_fraction}
\end{align*}
which represents the long-run fraction of time during which the transmitter is scheduled to transmit, equivalently the probability of transmission, under the AoA-Opt. policy. This quantity also captures the cost of transmission associated with the AoA-Opt. policy. Finally, we define the probability of an empty battery under the AoA-Opt. policy as
\begin{align*} 
\bar{B}_{0}^{*,A}&=\limsup_{N \rightarrow \infty} \frac{1}{N+1} \sum_{k=0}^{N} \mathbb{E}_{P}^{\pi^*}[\mathbbm{1}_{\{B_{s(k)}=0\}}|s_{(A(0),B(0))}]\\
&=\sum_{s \in S}\text{Pr}\{s_{(A,0)}\}, \numberthis \label{equ:empty_battery}
\end{align*}
The existence of the probabilities (\ref{equ:active_fraction}) and (\ref{equ:empty_battery}) is discussed in Theorem~\ref{theo:WAUniErg}.


\subsection{Optimality Analysis}
Our problem falls under the infinite-horizon\footnote{The infinite-horizon setting implies that there is no terminal state.}, finite-state, finite-action setting. The corresponding Bellman equation is given by \cite[Proposition 5.2.1]{bertsekas2012dynamic}
\begin{equation} \label{Bellman_general}
C^{*} + h^{*}_{s} = \min_{d \in D}[c_{s,d}+\sum_{s^{'} \in S}p_{s,d,s^{'}}h_{s^{'}}^{*}], \ \forall s \in S,
\end{equation}
in which $h_{s}^{*}$ is defined as the optimal differential cost function\footnote{In the remainder of the paper, to avoid ambiguity, we drop the term “differential” and simply refer to it as the cost function.} associated with state $s$, for all $s \in S$. The quantity $C^*$ denotes the optimal average cost per stage of the MDP.
\textcolor{black}{Note} that
\begin{align*}
C^{*}&= \lim_{N \rightarrow \infty} \frac{1}{N+1}\sum_{k=0}^{N} c_{s_{(A,B)},d_{s}^{*}}(k), \\
&= \lim_{N \rightarrow \infty} \frac{1}{N+1}\sum_{k=0}^{N} c_{s_{(A,B)}}(k), \\
&= \lim_{N \rightarrow \infty} \frac{1}{N+1}\sum_{k=0}^{N} A(k),\\
&= \bar{A}^{*}, \numberthis \label{proof_opt_cost_is_ave_age_red}
\end{align*}
where $c_{s_{(A,B)},d_{s}^{*}}$ is the instant cost incurred when action $d$ of the optimal policy is taken in state $s$.
As discussed in Section \ref{MDP Characterization}, this cost is independent of the action and thus of the policy, and it equals the age of the current state.

Using (\ref{cost_for_notrans_and_trans_red}) and (\ref{proof_opt_cost_is_ave_age_red}), we can rewrite the Bellman equation (\ref{Bellman_general}) as
\begin{equation}  \label{Bellman_our_case_red}
\bar{A}^{*} + h^{*}(s) = A_{s} + \min_{d \in D}[\sum_{s^{'} \in S}p_{s,d,s^{'}}h^*(s^{'})],\ \forall s \in S.
\end{equation}
\textcolor{black}{Based on equations (\ref{equ:Sys_dynamics_B=B_0}), (\ref{equ:Sys_dynamics_0<B<B_m}), and (\ref{equ:Sys_dynamics_B=B_m}),} we can specify (\ref{Bellman_our_case_red}) for states with different battery levels. For states with $B=0$, it is

\begin{align*}
\bar{A}^{*} + h^{*}_{(A,0)} = A+ \min[ & w h^{*}_{(1,0)} +\bar{\lambda} h^{*}_{(\textcolor{black}{A^{\text{tr}}},0)}+ y h^{*}_{(\textcolor{black}{A^{\text{tr}}},1)},\\ & \bar{\lambda} h^{*}_{(\textcolor{black}{A^{\text{tr}}},0)} + \lambda h^{*}_{(\textcolor{black}{A^{\text{tr}}},1)}], \numberthis  \label{equ:BellmanEqu_B_0}
\end{align*}
for states with $0<B< B_{m}$, it is
\begin{align*}
\bar{A}^{*} + &h^{*}_{(A,B)} = A+ \min[ w h^{*}_{(1,B)} + x h^{*}_{(1,B-1)} + \\ & y h^{*}_{(\textcolor{black}{A^{\text{tr}}},B+1)} +z h^{*}_{(\textcolor{black}{A^{\text{tr}}},B)},  \bar{\lambda} h^{*}_{(\textcolor{black}{A^{\text{tr}}},B)} + \lambda h^{*}_{(\textcolor{black}{A^{\text{tr}}},B+1)} ], \numberthis  \label{equ:BellmanEqu_B_0_B_B_m}
\end{align*}
and for states with $B=B_{m}$, it is
\begin{align*}
\bar{A}^{*} + h^{*}_{(A,B_{m})} = A+ \min[& w  h^{*}_{(1,B_{m})} +x h^{*}_{(1,B_{m}-1)}+ \\
& \bar{\mu} h^{*}_{(\textcolor{black}{A^{\text{tr}}},B_{m})},   h^{*}_{(\textcolor{black}{A^{\text{tr}}},B_{m})}  ],  \numberthis  \label{equ:BellmanEqu_B_m}
\end{align*}

where $h^{*}_{(A,B)}\equiv h^{*}(s_{(A,B)})$ is the optimal cost function associated with state $s_{(A,B)}$ \textcolor{black}{and $A^{\text{tr}}=\min\{A+1,A_{m}\}$ due to age truncation.} The problem is then solved using the RVI algorithm, as described in Algorithm \ref{alg:RVI}. As established in Theorem~\ref{theo:WAUniErg}, the RVI algorithm \textcolor{black}{converges and yields the optimal average cost} for (\ref{equ:Average_AoA_MDP}) that is independent of the initial state.
\begin{algorithm}
\caption{Relative Value Iteration (RVI) for Average Cost MDP}
\label{alg:RVI}
\begin{algorithmic}[1]
\STATE \textbf{Input:} State space $S$, action space $D$, \textcolor{black}{instant costs} $c_{s}$, transition probabilities $p_{s,d,s^{'}}$, tolerance $\epsilon > 0$
\STATE \textbf{Output:} Optimal relative values $h^{*}(s)$, average cost $\bar{A}^*$, optimal policy $\pi^*$
\STATE Initialize $h(s), h_{\text{new}}(s) \leftarrow 0, \ \forall s \in S$
\STATE Choose $s_{(1,0)}$ as the reference state
\REPEAT
    \STATE $h(s) \leftarrow h_{\text{new}}(s), \ \forall s$
    \FOR{each state $s \in S$}
        \STATE $h_{\text{new}}(s) \leftarrow A_{s} + \min_{d \in D}\left[\sum_{s^{'} \in S} p_{s,d,s^{'}}  h(s^{'})\right]$
    \ENDFOR
    \STATE Normalize: $h_{\text{ref}} \leftarrow h_{\text{new}}(s_{(1,0)})$  
    \STATE \hphantom{Normalize:} $h_{\text{new}}(s) \leftarrow h_{\text{new}}(s) - h_{\text{ref}}, \ \forall s$
\UNTIL{$\max_{s \in S} |h(s) - h_{\text{new}}(s)| < \epsilon$}
\STATE $h^{*}(s) \leftarrow h_{\text{new}}(s), \ \forall s$
\STATE $\bar{A}_{s}^{*} = A_{s}+ \min_{d \in D} \left[\sum_{s^{'} \in S} p_{s,d,s^{'}}  h^*(s^{'}) \right] - h^{*}(s), \ \forall s $
\STATE $\bar{A}^{*}=\frac{1}{|S|} \sum_{s \in S} \bar{A}_{s}^{*}$
\STATE $d_{s}^* = \arg\min_{d \in D}\left[\sum_{s^{'} \in S} p_{s,d,s^{'}}  h^*(s^{'})\right], \ \forall s$
\end{algorithmic}
\end{algorithm}

\begin{remark}
The average AoI, the transmission rate, and the probability of an empty battery under the AoA-Opt. policy are calculated by simulating the system under the obtained optimal policy for each pair of $\mu$ and $\lambda$. They are denoted as $\bar{I}^{*,A}$, $\bar{D}^{*,A}$, and $\bar{B}_{0}^{*,A}$, respectively.
\end{remark}

\begin{theorem} \label{theo:WAUniErg}
For the MDP problem presented in Section \ref{MDP Characterization}, \textcolor{black}{with $0<\mu, \lambda<1$,} the following statements are true:

\begin{enumerate}[label=(\roman*)]

\item \label{theo:WAUniErg_unichain} All the policies, including the optimal ones, of the MDP are unichain.

\item \label{theo:WAUniErg_WA} The MDP problem presented is weakly accessible.

\item \label{theo:WAUniErg_Erg} The induced Markov chains from all policies, including the optimal ones, have exactly one \textit{ergodic} recurrent class.

\item \label{theo:WAUniErg_Indep} \textcolor{black}{Algorithm~\ref{alg:RVI} (RVI) converges and yields the optimal average cost (equivalent to the average AoA), which is independent of the starting state.}

\item \label{theo:WAUniErg_Irre} \textcolor{black}{If $A_{m}$ is large enough that the optimal policy has $d=1$ in at least one state at every battery level, then the induced Markov chains from the optimal policies have no transient states and are fully irreducible.}

\item \label{theo:WAUniErg_Exist} The probabilities (\ref{equ:active_fraction}) and (\ref{equ:empty_battery}) exist and are independent of the starting state.
\end{enumerate}
\end{theorem}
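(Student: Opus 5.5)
The plan is to exhibit a single state that every stationary deterministic policy reaches with positive probability from every other state, and to show that this state is aperiodic. All six claims then follow from standard finite-MDP results in \cite{bertsekas2012dynamic}. The natural candidate is $s^{\dagger}=(A_m,B_m)$, the state with truncated age and a full battery.

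\textbf{Step 1: $s^{\dagger}$ is reachable under every policy.} Fix any policy $\pi$ and any starting state $(A,B)$. Whatever action is taken, there is a positive-probability transition that does not actuate and does not decrease the battery.
\begin{itemize}[label={}]
\item If $d=0$ and $B<B_m$, an energy arrival occurs w.p.\ $\lambda>0$.
\item If $d=1$ and $B<B_m$, a channel failure together with an arrival occurs w.p.\ $y=\bar{\mu}\lambda>0$.
\item If $B=B_m$, the event $d=0$ (w.p.\ $1$) or a channel failure (w.p.\ $\bar{\mu}>0$) leaves the battery full.
\end{itemize}
Each of these moves increments the age to $A^{\text{tr}}$. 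Chaining them, the battery rises to $B_m$ and the age rises to $A_m$ within at most $A_m+B_m$ slots, with probability bounded below by $\min\{\lambda,\bar{\mu}\lambda,\bar{\mu}\}^{A_m+B_m}>0$. Moreover, $s^{\dagger}$ has a self-loop under either action, with probability $1$ or $\bar{\mu}$.

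\textbf{Step 2: claims (i)--(iv).} A state reachable from every state lies in every closed class. Hence each induced chain has exactly one recurrent class, namely the one containing $s^{\dagger}$, which proves (i). The self-loop makes that class aperiodic, and a finite aperiodic recurrent class is ergodic, which proves (iii). Weak accessibility (ii) requires a set of states that communicate with each other under some policy, and that is reachable from every state under some policy, with all other states transient under every policy. I will take the closed communicating set under the always-transmit policy and check that it contains $s^{\dagger}$. Step 1 then gives accessibility from every state, so (ii) holds as well. Given (ii) and unichain aperiodicity, \cite[Prop.~5.2.1 and the RVI convergence results]{bertsekas2012dynamic} give that the Bellman equation (\ref{Bellman_our_case_red}) has a solution with constant $\bar{A}^{*}$ independent of the initial state. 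They also give convergence of Algorithm~\ref{alg:RVI}, which needs aperiodicity; this is supplied by the self-loop at the reference-reachable state. This proves (iv).

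\textbf{Step 3: claims (v) and (vi).} For (v), I must show that under the optimal policy every state communicates with $s^{\dagger}$. From $s^{\dagger}$, let the age grow while the battery stays at $B_m$. By hypothesis some state at level $B_m$ uses $d=1$, and from it the event $x=\mu\bar{\lambda}$ leads to $(1,B_m-1)$. From there the age again rises along the battery level until a transmitting state is hit, and a further $x$-event drops the battery by one. Energy arrivals (w.p.\ $\lambda$ or $y$) raise the battery. Combining these, every pair (age, battery) can be reached: reset the age via a $w$- or $x$-move at a transmitting state, adjust the battery, then let the age climb to any target value.

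This is the delicate step and the main obstacle. States whose age is below every transmit threshold at their level are reachable only by ageing up from age $1$. Age $1$ at battery $b$ arises from $(a,b)$ through $w$ or from $(a,b+1)$ through $x$, and both require a transmitting state at some battery level. The hypothesis gives exactly this at every level, so I expect the argument to close, though the bookkeeping near $B=0$ and $B=B_m$ needs care. For (vi), both (\ref{equ:active_fraction}) and (\ref{equ:empty_battery}) are long-run averages of bounded functions of the state under a finite unichain aperiodic chain. Their Cesàro limits therefore exist and equal expectations under the unique stationary distribution, which is independent of the initial state.
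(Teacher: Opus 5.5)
Your treatment of (i)--(iv) and (vi) follows the paper. Everything rests on $(A_m,B_m)$ being reachable under every policy and having a self-loop, and your Step~1 supplies the justification the paper only asserts. There is one small point in (ii). Checking that the always-transmit class contains $s^{\dagger}$ does not show that the states outside that class are transient under \emph{every} policy. You can close this in either of two ways. One is to note that for $0<\mu,\lambda<1$ the always-transmit chain is irreducible on all of $S$, so the transient set is empty. The other is to invoke the unichain-implies-weak-accessibility result, as the paper does.

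The genuine gap is in (v). Your path begins with ``from $s^{\dagger}$, let the age grow while the battery stays at $B_m$.'' But $s^{\dagger}$ already has age $A_m$, so the chain can leave it only if $d_{(A_m,B_m)}=1$. The hypothesis only guarantees a transmitting state \emph{somewhere} at level $B_m$, possibly at an age $a_0<A_m$. Consider a policy with $d_{(a_0,B_m)}=1$, $d_{(A_m,B_m)}=0$, and a transmitting state at every other level. It satisfies the hypothesis, yet $(A_m,B_m)$ is absorbing and every other state is transient. So no bookkeeping can close the argument from the hypothesis alone; you must use a property of the optimal policy.

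The paper uses Theorem~\ref{theo:threshold_A}: by the threshold structure in $A$, a transmitting state at level $B_m$ forces $d_{(A_m,B_m)}=1$. Alternatively, a policy with $d_{(A_m,B_m)}=0$ has average cost $A_m$ from every initial state. That exceeds $\bar{A}^{*,I}$, which upper-bounds the truncated always-transmit cost, once $A_m>\bar{A}^{*,I}$, so such a policy cannot be optimal.

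Once $d_{(A_m,B_m)}=1$, your sketch goes through:
\begin{enumerate}[label=(\alph*)]
\item The $w$- and $x$-moves from $(A_m,B_m)$ enter levels $B_m$ and $B_m-1$ at age $1$.
\item Ageing within a level has positive probability under either action: $\bar\lambda$ or $z$ below $B_m$, and $1$ or $\bar\mu$ at $B_m$. Hence every state of a level entered at age $1$ is visited.
\item Any transmitting state at a level $b\ge 1$ then yields an $x$-move to $(1,b-1)$. Induction downward covers all levels.
\item Step~1 returns every state to $s^{\dagger}$.
\end{enumerate}
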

\begin{proof}
See Appendix \textcolor{black}{\ref{App:WAUniErg}}.
\end{proof}

\subsection{Structural Property of the Result\textcolor{black}{s}}
In this section, we discuss the threshold structure of the optimal AoA-Opt.\ policy.

\begin{lemma}\label{lemma:increasing}
The cost function of $h_{(A,B)}$ at each iteration of the RVI is
\begin{enumerate}[label=(\roman*)]
\item \label{lemma:increasing_A} increasing with $A$
\begin{equation} \label{equ:increasing_A}
A_{1} \leq A_{2} \rightarrow h_{(A_{1},B)}^{k} \leq h_{(A_{2},B)}^{k}, \ \forall k \geq 0,
\end{equation}
\item \label{lemma:increasing_B} decreasing with $B$
\begin{equation}\label{equ:increasing_B}
B_{1} \leq B_{2} \rightarrow h_{(A,B_{1})}^{k} \geq h_{(A,B_{2})}^{k}, \ \forall k \geq 0,
\end{equation}
\end{enumerate}
including for the converged optimal cost functions, $h^{*}$.
\end{lemma}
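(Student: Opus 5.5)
We argue by induction on the RVI iteration index $k$, working first with the unnormalized value-iteration iterates $V^{k}$, defined by $V^{0}\equiv 0$ and $V^{k+1}(s)=A_s+\min_{d}\sum_{s'}p_{s,d,s'}V^{k}(s')$. The RVI iterate satisfies $h^{k}=V^{k}-V^{k}(s_{(1,0)})$, because subtracting a constant commutes with the Bellman operator: every row of $p_{s,d,\cdot}$ sums to one. Any monotonicity of $V^{k}$ in $A$ or $B$ therefore transfers directly to $h^{k}$. The base case $h^{0}\equiv 0$ is trivial. For the limit, Theorem~\ref{theo:WAUniErg}\ref{theo:WAUniErg_Indep} gives convergence $h^{k}\to h^{*}$, and weak inequalities are preserved under pointwise limits. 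This yields the claim for $h^{*}$.

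\textbf{Monotonicity in $A$.} Assume $h^{k}_{(A,B)}$ is nondecreasing in $A$ for every $B$. Fix $B$ and $A_1\le A_2$. For each action $d$, the transition laws in (\ref{equ:Sys_dynamics_B=B_0})--(\ref{equ:Sys_dynamics_B=B_m}) from $(A_1,B)$ and from $(A_2,B)$ have identical probabilities. Their targets are either a state with age $1$ (the same state in both cases) or the state $(A^{\text{tr}},B')$ with the same $B'$. Since $A^{\text{tr}}=\min\{A+1,A_m\}$ is nondecreasing in $A$, the induction hypothesis gives $Q^{k}(A_1,B,d)\le Q^{k}(A_2,B,d)$ term by term, where $Q^{k}$ denotes the expected next-stage value. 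Taking the minimum over $d$ preserves the inequality, and adding $A_1\le A_2$ completes the step.

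\textbf{Monotonicity in $B$.} Assume $h^{k}_{(A,B)}$ is nonincreasing in $B$ and nondecreasing in $A$; part \ref{lemma:increasing_A} at iteration $k$ is available from the induction above. It suffices to compare $B$ with $B+1$ at a fixed $A$, and to compare the two $Q$-values action by action. For $d=0$, the next battery level is $\min\{B+\Lambda,B_m\}$, which is monotone in $B$ for every realization of $\Lambda$. Coupling the two systems on $\Lambda$ therefore gives $Q^{k}(A,B+1,0)\le Q^{k}(A,B,0)$. For $d=1$, we couple on $(M,\Lambda)$. The next battery level $B-\omega+\Lambda$ (capped at $B_m$) is $\ge$ for the larger battery in every realization except one: the boundary case $B=0$, $M=1$, $\Lambda=0$. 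There the system with $B=0$ does not actuate and moves to $(A^{\text{tr}},0)$, while the system with $B=1$ actuates and moves to $(1,0)$. In that case we use part \ref{lemma:increasing_A} together with $1\le A^{\text{tr}}$, which gives $h^{k}_{(1,0)}\le h^{k}_{(A^{\text{tr}},0)}$. In every other realization, the ages coincide (age $1$ if actuation happens in both systems, $A^{\text{tr}}$ in both otherwise), because actuation with $B\ge1$ does not depend on $B$. So only the battery component differs, and it is larger for $B+1$. The realizations with $\omega=1$ land on $(1,B)$ versus $(1,B+1)$, and $(1,B-1)$ versus $(1,B)$, with the cap at $B_m$ handled identically. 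Hence $Q^{k}(A,B+1,1)\le Q^{k}(A,B,1)$, and taking minima completes the step.

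\textbf{Main obstacle.} The delicate point is the $d=1$ coupling in the $B$-step. The realization-wise comparison must be checked against the explicit transition tables, including both boundaries $B=0$ and $B+1=B_m$, where the truncation $\min\{\cdot,B_m\}$ and the "full battery discards" rule apply. The $B=0$ case is the only place where the two coupled systems differ in age as well as battery, and this is exactly why the two monotonicities must be proved jointly in a single induction rather than separately.
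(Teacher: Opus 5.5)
Your proof is correct and follows essentially the same route as the paper. Both argue by induction over the RVI iterates, compare the two action-values separately and then take the minimum, and prove part (i) first so it can be used at the same iteration for the $B=0$, $M=1$, $\Lambda=0$ transition; that is exactly where the paper invokes Lemma~\ref{lemma:chain_inequality} in its explicit four-case comparison.

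Your coupling over $(M,\Lambda)$ between adjacent battery levels is a more compact packaging of that case analysis. You also treat the normalization step and the limit $h^{k}\to h^{*}$ explicitly, which the paper leaves implicit.
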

\begin{proof}
Refer to Appendix \ref{App:lemma:increasing}.
\end{proof}

\begin{lemma} \label{lemma:chain_inequality}
Consider Lemma \ref{lemma:increasing}\ref{lemma:increasing_A} and \ref{lemma:increasing}\ref{lemma:increasing_B}. If $s_{(A_{1},B_{1})}$ and $s_{(A_{2},B_{2})}$ are two states such that $A_{1} \leq A_{2}$ and $B_{1} \geq B_{2}$, then
\begin{equation} \label{equ:lemma_chain_main}
h_{(A_{1},B_{1})}^{k} \leq  h_{(A_{2},B_{2})}^{k}.
\end{equation}
\end{lemma}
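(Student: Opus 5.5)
The plan is to route the comparison through the intermediate state $s_{(A_{2},B_{1})}$ (or, equally, $s_{(A_{1},B_{2})}$), so that each step changes only one coordinate. Then the two monotonicity properties of Lemma~\ref{lemma:increasing} can be applied one at a time. Both parts of Lemma~\ref{lemma:increasing} hold at every RVI iteration $k\geq 0$, for every fixed value of the other coordinate. The chain inequality therefore follows iteration by iteration, and also for the converged function $h^{*}$.

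Concretely, fix $k\geq 0$ and take $A_{1}\leq A_{2}$ and $B_{1}\geq B_{2}$. First hold the battery level at $B_{1}$. Lemma~\ref{lemma:increasing}\ref{lemma:increasing_A} gives $h^{k}_{(A_{1},B_{1})}\leq h^{k}_{(A_{2},B_{1})}$. Next hold the age at $A_{2}$ and compare the battery levels $B_{2}\leq B_{1}$. Lemma~\ref{lemma:increasing}\ref{lemma:increasing_B}, applied with the roles $B_{2}\leq B_{1}$, gives $h^{k}_{(A_{2},B_{2})}\geq h^{k}_{(A_{2},B_{1})}$. Chaining the two,
\begin{equation*}
h^{k}_{(A_{1},B_{1})}\leq h^{k}_{(A_{2},B_{1})}\leq h^{k}_{(A_{2},B_{2})},
\end{equation*}
which is (\ref{equ:lemma_chain_main}).

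There is no real obstacle here. The only point needing care is that the intermediate state $s_{(A_{2},B_{1})}$ lies in the state space $S$. This holds because $A_{2}\in\{1,\dots,A_{m}\}$ and $B_{1}\in\{0,\dots,B_{m}\}$, and $S$ is the full product of these ranges. For the converged function, we either pass to the limit in the per-iteration inequality or invoke the statement of Lemma~\ref{lemma:increasing} for $h^{*}$ directly. Non-strict inequalities are preserved under limits, and the normalization by $h_{\text{ref}}$ subtracts the same constant from every state, so the ordering is unaffected.
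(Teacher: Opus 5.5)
Your proposal is correct and follows the paper's proof exactly: the paper also chains $h^{k}_{(A_{1},B_{1})}\leq h^{k}_{(A_{2},B_{1})}\leq h^{k}_{(A_{2},B_{2})}$ through the intermediate state $s_{(A_{2},B_{1})}$, using Lemma~\ref{lemma:increasing}\ref{lemma:increasing_A} for the first step and Lemma~\ref{lemma:increasing}\ref{lemma:increasing_B} for the second. Your additional remarks (that the intermediate state lies in $S$, that normalization preserves the ordering, and that the inequality passes to $h^{*}$) are harmless details the paper leaves implicit.
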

\begin{proof}
\begin{equation} \label{equ:lemma_chain_proof}
h_{(A_{1},B_{1})}^{k} \overset{(a)}{\leq} h_{(A_{2},B_{1})}^{k} \overset{(b)}{\leq} h_{(A_{2},B_{2})}^{k},
\end{equation}
where (a) and (b) follow from Lemmas \ref{lemma:increasing}\ref{lemma:increasing_A} and \ref{lemma:increasing}\ref{lemma:increasing_B}, respectively.
\end{proof}

\begin{theorem} \label{theo:threshold_A}
The optimal policy is threshold-based w.r.t. the AoA ($A$). That is, for two states $s_{1}\equiv s_{(A_{1},B)}$ and $s_{2}\equiv s_{(A_{2},B)}$, then
\begin{equation}  \label{structure_A}
\forall B: \
d_{2}^{*}=
\begin{cases}
1 &  (d_{1}^{*}=1) \land (A_{1} \leq A_{2}),\\
0 &  (d_{1}^{*}=0) \land (A_{1} \geq A_{2}).
\end{cases}
\end{equation}
\end{theorem}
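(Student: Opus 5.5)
The plan is to reduce the theorem to a monotonicity property of the difference between the two terms inside the minimum of the Bellman equation (\ref{Bellman_our_case_red}). Fix a battery level $B$ and write $Q_{d}(A,B)=\sum_{s'}p_{s_{(A,B)},d,s'}h^{*}(s')$ for $d\in D$. Define $\Delta(A,B)=Q_{1}(A,B)-Q_{0}(A,B)$. Since the instant cost $A_s$ does not depend on the action, the optimal decision is $d^{*}_{(A,B)}=1$ exactly when $\Delta(A,B)\le 0$. Here we fix a tie-breaking convention consistent with the $\arg\min$ of Algorithm~\ref{alg:RVI}: on ties, $d=1$ is chosen. The theorem then follows once we show that $\Delta(A,B)$ is nonincreasing in $A$ for every $B$. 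Indeed, if $d_1^{*}=1$ and $A_1\le A_2$, then $\Delta(A_2,B)\le\Delta(A_1,B)\le 0$, so $d_2^{*}=1$. The second case of (\ref{structure_A}) is just the contrapositive of the first, with the roles of $A_1$ and $A_2$ exchanged.

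The key step is to compute $\Delta$ explicitly from (\ref{equ:BellmanEqu_B_0})--(\ref{equ:BellmanEqu_B_m}). The terms involving the no-actuation outcomes partially cancel between the two actions. Using $w=\mu\lambda$, $x=\mu\bar\lambda$, $y=\bar\mu\lambda$, $z=\bar\mu\bar\lambda$ and $A^{\text{tr}}=\min\{A+1,A_m\}$, I expect the following three forms.
\begin{align*}
B=0:&\quad \Delta=\mu\lambda\bigl(h^{*}_{(1,0)}-h^{*}_{(A^{\text{tr}},1)}\bigr),\\
0<B<B_m:&\quad \Delta=\mu\lambda\bigl(h^{*}_{(1,B)}-h^{*}_{(A^{\text{tr}},B+1)}\bigr)+\mu\bar\lambda\bigl(h^{*}_{(1,B-1)}-h^{*}_{(A^{\text{tr}},B)}\bigr),\\
B=B_m:&\quad \Delta=\mu\lambda h^{*}_{(1,B_m)}+\mu\bar\lambda h^{*}_{(1,B_m-1)}-\mu h^{*}_{(A^{\text{tr}},B_m)}.
\end{align*}
In each case, the terms with age $1$ do not depend on $A$. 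The only $A$-dependence enters through terms of the form $-h^{*}_{(A^{\text{tr}},B')}$ with nonnegative coefficients.

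Since $A\mapsto A^{\text{tr}}$ is nondecreasing, Lemma~\ref{lemma:increasing}\ref{lemma:increasing_A}, applied to the converged $h^{*}$, gives that each $h^{*}_{(A^{\text{tr}},B')}$ is nondecreasing in $A$. Hence $\Delta(\cdot,B)$ is nonincreasing in $A$, which completes the argument. Lemma~\ref{lemma:increasing}\ref{lemma:increasing_B} is not needed for monotonicity. It can, however, be cited to note that the bracketed differences have a definite sign, since by Lemma~\ref{lemma:chain_inequality} $h^{*}_{(1,B)}\le h^{*}_{(A^{\text{tr}},B+1)}$ may fail. This sign information is what makes transmission attractive at large $A$.

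The main obstacle is bookkeeping rather than analysis. One part is checking the cancellations carefully at the boundary levels $B=0$ and $B=B_m$, including the full-battery case where an energy arrival is discarded. The other part is handling ties in the $\arg\min$ consistently. With the stated tie-breaking rule the optimal set $\{A:\ d^{*}_{(A,B)}=1\}$ is upward closed in $A$, which is exactly the threshold structure claimed. The truncation at $A_m$ causes no difficulty, since $A^{\text{tr}}$ remains monotone.
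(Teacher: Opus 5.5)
Your proposal is correct and follows the paper's proof essentially step for step. The paper likewise forms the difference of the two Bellman terms, and its $\Delta\alpha_2^*-\Delta\alpha_1^*$ is exactly $\Delta(A_2,B)-\Delta(A_1,B)$ computed from your three expressions; it then concludes from Lemma~\ref{lemma:increasing}\ref{lemma:increasing_A} that this difference is nonincreasing in $A$. The differences are cosmetic: the paper breaks ties toward $d=0$ (it transmits iff $\Delta\theta^*<0$) rather than toward $d=1$, which works equally well, and your aside invoking Lemma~\ref{lemma:chain_inequality} for a definite sign is muddled (that lemma does not apply when the battery level increases) but is never used in the argument.
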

\begin{proof}
The proof can be found in the Appendix \ref{App:threshold_A}.
\end{proof}
\textcolor{black}{
\begin{remark} \label{theo:threshold_B}
Similarly, we investigate a threshold-based structure of the optimal policy w.r.t. the battery level ($B$), in Appendix \ref{App:threshold_B}. But the monotonicity results do not establish a threshold-based structure along the battery level dimension. That is, given two states $s_{1}\equiv s_{(A,B_{1})}$ and $s_{2}\equiv s_{(A,B_{2})}$, the threshold structure
\begin{equation} \label{structure_B}
\forall A: \
d_{2}^{*}=
\begin{cases}
1 &  (d_{1}^{*}=1) \land (B_{1} \leq B_{2}),\\
0 &  (d_{1}^{*}=0) \land (B_{1} \geq B_{2}),
\end{cases}
\end{equation}
cannot be established.
However, such a structure is consistently observed in numerical results. Thus, it remains an analytical question for the present model.
\end{remark}}



\section{Age of Information (AoI) Optimization} \label{Age of Information (AoI) Optimization}
In this section, we analyze the system outcomes when the goal is to optimize the average AoI.
\textcolor{black}{By ${v}^{*,I}$, we denote the value of $v$ under the AoI-Opt. policy.}

Intuitively, in the considered system model in Fig. \ref{fig:System Model}, the optimal action to minimize the average AoI is to transmit at all time slots: $d_{s}^{*,I}=1, \ \forall s \in S$. Consequently, the transmission rate is $\bar{D}^{*,I}=1$, the average AoI is $\bar{I}^{*,I}=1/\mu$ \cite{pappas2023age}, and the probability of an empty battery is \textcolor{black}{$\bar{B}_{0}^{*,I}=(1-r)/(1-r^{B_{m}+1})$ for $\mu \neq \lambda$, and $\bar{B}_{0}^{*,I}=1/(B_{m}+1)$ for $\mu=\lambda$, where $r=(\bar{\mu}\lambda)/(\mu\bar{\lambda})$ \cite{srikant2013communication}.}
\textcolor{black}{\begin{remark}
The always-transmit policy is optimal for average AoI because each transmission carries the freshest possible sample, transmission attempts do not create a queue or interfere with future opportunities, and transmissions have no cost or constraint in the AoI optimization. This can be verified by solving a separate, trivial MDP where the state is only the AoI value at each time slot; due to its size and triviality, we have opted not to include it in the paper. If any of these assumptions change, however, the always-transmit policy need not remain optimal.
\end{remark}}
However, the average AoA under this policy is part of the contributions of this work and is stated in the following theorem:
\begin{theorem}
The average AoA for the AoI-Opt. policy with $\mu$ and $\lambda$ (as service probability and arrival probability, respectively, regarding the battery queue) and the battery size of $B_{m}$, \textcolor{black}{in general,} is
\begin{equation} \label{equ:Average_AoA_MC_body_MOST_GENERAL}
\bar{A}^{*,I}=\frac{1}{\mu} + \frac{\bar{\lambda}}{\lambda} \textcolor{black}{\bar{B}_{0}^{*,I}},   
\end{equation}
\textcolor{black}{which, for $\mu \neq \lambda$, becomes}
\begin{equation} \label{equ:Average_AoA_MC_body_general}
\bar{A}^{*,I}=\frac{\mu^{B_{m}+2}\left( 1- \lambda \right)^{B_{m}+1}-\left( 1- \mu \right)^{B_{m}+1} \lambda^{B_{m}+2}}{\mu^{B_{m}+2}\lambda\left( 1- \lambda \right)^{B_{m}+1}-\mu\left(1-\mu\right)^{B_{m}+1} \lambda^{B_{m}+2}},
\end{equation}
\textcolor{black}{and, for $\mu=\lambda$, becomes}
\begin{equation} \label{equ:Average_AoA_MC_body_equal}
\bar{A}^{*,I}=\frac{2-\mu+B_{m}}{\mu+\mu B_{m}}.
\end{equation}
\end{theorem}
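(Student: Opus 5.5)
The plan is to avoid the full stationary distribution of the pair $(A,B)$ and work only with first moments of the age. Under the AoI-Opt.\ (always-transmit) policy, $d(k)=1$ for all $k$, so $\{(A(k),B(k))\}$ is a time-homogeneous Markov chain (take $A_m=\infty$). Its transitions are given by (\ref{equ:Sys_dynamics_B=B_0})--(\ref{equ:Sys_dynamics_B=B_m}) restricted to $d=1$. Its battery marginal is the birth--death chain with up-probability $\bar{\mu}\lambda$ and down-probability $\mu\bar{\lambda}$. Hence $p_b:=\Pr\{B=b\}$ satisfies $\mu\bar{\lambda}p_{b+1}=\bar{\mu}\lambda p_b$, and $p_0=\bar{B}_{0}^{*,I}$ is as quoted from \cite{srikant2013communication}.

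\textbf{Step 1 (well-posedness).} Theorem~\ref{theo:WAUniErg} applies to the always-transmit policy. I will additionally argue that, from any state, an actuation happens in each slot with probability at least $w=\mu\lambda>0$. Hence the age is stochastically dominated by a geometric variable, and all its moments are finite. This justifies writing stationary balance relations for $\mathbb{E}[A]$ and for $m_0:=\sum_a a\,\Pr\{(a,0)\}$.

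\textbf{Step 2 (first moment balance).} Write the age recursion for the next slot as $A(k+1)=(A(k)+1)(1-\omega(k+1))+\omega(k+1)$. Given the state $(A(k),B(k))$, the next-slot actuation probability is $\mu$ if $B(k)\geq 1$ and $\mu\lambda$ if $B(k)=0$. Taking stationary expectations gives $\mathbb{E}[A\,\omega]=1$. Expanding this conditionally yields $\mu(\mathbb{E}[A]-m_0)+\mu\lambda m_0=1$, that is,
\begin{equation*}
\mathbb{E}[A]=\frac{1}{\mu}+\bar{\lambda}\,m_0 .
\end{equation*}

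\textbf{Step 3 (the key obstacle: computing $m_0$).} I would apply the same balance to $A\,\mathbbm{1}_{\{B=0\}}$. A slot ends with $B=0$ in exactly three ways:
\begin{enumerate}[label=(\alph*)]
\item from $B=0$ with no arrival, so there is no actuation and the age grows by one (probability $\bar{\lambda}$);
\item from $B=0$ with an arrival and a successful delivery, so the age resets to $1$ (probability $\mu\lambda$);
\item from $B=1$ with a successful delivery and no arrival, so the age resets to $1$ (probability $\mu\bar{\lambda}$).
\end{enumerate}
Stationarity then gives
\begin{equation*}
m_0=\bar{\lambda}(m_0+p_0)+\mu\lambda p_0+\mu\bar{\lambda}p_1 .
\end{equation*}
Substituting the birth--death balance $\mu\bar{\lambda}p_1=\bar{\mu}\lambda p_0$ gives $\lambda m_0=p_0$, so $m_0=\bar{B}_{0}^{*,I}/\lambda$. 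Combining with Step 2 yields (\ref{equ:Average_AoA_MC_body_MOST_GENERAL}). The case $B_m=1$ needs a separate check: there $B=1$ is the full state, and the transition into $B=0$ is still $\mu\bar{\lambda}$, so the derivation is unchanged.

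\textbf{Step 4 (closed forms).} Substituting $p_0=(1-r)/(1-r^{B_m+1})$ with $r=\bar{\mu}\lambda/(\mu\bar{\lambda})$ gives (\ref{equ:Average_AoA_MC_body_general}). I would multiply the numerator and denominator by $\mu^{B_m+1}\bar{\lambda}^{B_m+1}$, put everything over the common denominator $\mu\lambda(\mu^{B_m+1}\bar{\lambda}^{B_m+1}-\bar{\mu}^{B_m+1}\lambda^{B_m+1})$, and simplify.

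For the $\mu=\lambda$ case, $p_0=1/(B_m+1)$ gives
\begin{equation*}
\frac{1}{\mu}+\frac{1-\mu}{\mu(B_m+1)}=\frac{2-\mu+B_m}{\mu(1+B_m)},
\end{equation*}
which is (\ref{equ:Average_AoA_MC_body_equal}). The only real care point is Step 3: choosing the auxiliary functional $A\,\mathbbm{1}_{\{B=0\}}$ and correctly enumerating the transitions into battery level $0$, in particular the same-slot harvest-and-use event.
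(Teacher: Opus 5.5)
Your proof is correct, and it takes a different route from the paper's. In Appendix~\ref{App:Randomized_Policy_Proof}, the paper writes the full balance equations for $V_{a,b}$ and sums them over $b$ to get the age-marginal recursion $V_{A_a}=\bar{\mu}V_{A_{a-1}}+xV_{a-1,0}$. It then uses $V_{a,0}=\lambda V_{B_0}\bar{\lambda}^{a-1}$ to solve that recursion explicitly as a convolution of two geometric sequences. Finally, it evaluates $\sum_a aV_{A_a}$ through a double series, which reduces to $\frac{1}{\mu}+\frac{\bar{\lambda}}{\lambda}V_{B_0}$ only after some cancellation.

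You never solve for the age distribution. Your two identities are exactly the $a$-weighted sums of those same balance equations: Step~2 comes from the equations summed over $b$, and Step~3 from the $b=0$ equations together with $xV_{B_1}=yV_{B_0}$. This takes you directly to $\mathbb{E}[A]=\frac{1}{\mu}+\bar{\lambda}m_0$ and $m_0=\bar{B}_0^{*,I}/\lambda$.

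What your route buys:
\begin{itemize}
\item It is shorter and avoids the series algebra.
\item It makes the interpretation in Remark~\ref{remark:AoA_AoI_Opt_interpretation} transparent: the entire excess over $1/\mu$ is $\bar{\lambda}$ times the age mass on the empty battery.
\end{itemize}
What the paper's route buys:
\begin{itemize}
\item It delivers the whole stationary AoA distribution, so higher moments or tail probabilities can be read off directly.
\end{itemize}
Your route does require finiteness of the first moment, which your geometric domination by $\mu\lambda$ supplies.

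One small point: Theorem~\ref{theo:WAUniErg} is stated for the truncated chain. For $A_m=\infty$, what actually guarantees the stationary regime is your Step~1 argument: the stationary battery chain plus the geometric bound on the time since the last actuation. The paper glosses over this as well.
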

\begin{proof}
The proof is available in the Appendix \ref{App:Randomized_Policy_Proof}.
\end{proof}

\begin{remark} \label{remark:AoA_AoI_Opt_interpretation}
\textcolor{black}{Note that in (\ref{equ:Average_AoA_MC_body_MOST_GENERAL}), the first term is the average AoI under the AoI-Opt.\ policy, $\bar{I}^{*,I}$, and the second term is a penalty proportional to the probability of an empty battery. Hence, under this policy, the gap between the average AoA and the average AoI is caused only by the empty battery.
Moreover, (\ref{equ:Average_AoA_MC_body_MOST_GENERAL}) gives $1/\lambda$ for $\mu=1$, since then $\bar{B}_{0}^{*,I}=1$, and $1/\mu$ for $\lambda=1$, since then the penalty vanishes. For $\mu=\lambda=1$, it gives $1$. These confirm the expected and known results.}
\end{remark}

\begin{remark} \label{remark:AoI_Opt_Dynamics}
The expressions characterizing the dynamics of the system under the AoI-Opt. policy, \textcolor{black}{which are utilized in Appendix \ref{App:Randomized_Policy_Proof} for obtaining $\bar{A}^{*,I}$,} are direct reductions of expressions (\ref{equ:Sys_dynamics_B=B_0}), (\ref{equ:Sys_dynamics_0<B<B_m}), and (\ref{equ:Sys_dynamics_B=B_m}), where only the cases with $d(k)=1$ are relevant, \textcolor{black}{with omission of the terms $d(k)=1$ themselves due to redundancy.} A Markov chain can describe this dynamic.
\end{remark}

\begin{theorem} \label{theo:relation_4_values}
\textcolor{black}{The four values of average AoA and average AoI under the AoA-Opt.\ and AoI-Opt.\ policies satisfy
\begin{equation} \label{equ:relation_4_values}
\bar{I}^{*,I} \leq \bar{I}^{*,A} \leq \bar{A}^{*,A} \leq \bar{A}^{*,I}.
\end{equation}}
\end{theorem}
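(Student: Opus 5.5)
The chain \eqref{equ:relation_4_values} splits into three inequalities, and I would prove each one by a different elementary argument.

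\textbf{Outer inequalities.} These follow directly from optimality. The AoI-Opt.\ policy (always-transmit) minimizes the average AoI over all policies, so $\bar{I}^{*,I} \leq \bar{I}^{*,A}$. By Theorem~\ref{theo:WAUniErg}, the AoA-Opt.\ policy $\pi^{*}$ attains $\bar{A}^{*}=\min_{\pi\in\Pi}\bar{A}^{\pi}$ independently of the initial state. Always-transmit is a stationary deterministic policy in $\Pi$, so $\bar{A}^{*,A} \leq \bar{A}^{*,I}$.

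For the first of these, I would justify optimality of always-transmit pathwise rather than through the auxiliary MDP the paper omits. Couple the channel outcomes $M(k)$ and the energy arrivals $\Lambda(k)$ across policies; these are exogenous and do not depend on the actions. Under always-transmit, $v_d(k)=M(k)$. Under any other policy, $v_d(k)=d(k)M(k)\leq M(k)$. Since $I(k)$ is reset to $1$ at every reception, an induction on $k$ shows that $I(k)$ under always-transmit is pointwise no larger than under any other policy. Taking expectations and time averages gives $\bar{I}^{*,I}\leq \bar{I}^{\pi}$ for every $\pi$, in particular for $\pi^{*}$.

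\textbf{Middle inequality.} For $\bar{I}^{*,A}\leq\bar{A}^{*,A}$, I would show $I(k)\leq A(k)$ for every sample path and every $k$, under any policy, given a common initial condition with $I(0)\leq A(0)$. The induction goes as follows.
\begin{itemize}
\item If $\omega(k)=1$, then by \eqref{Actuation_Considered_Metrics} also $v_d(k)=1$, so $I(k)=1=A(k)$.
\item If $\omega(k)=0$ but $v_d(k)=1$, then $I(k)=1\leq A(k-1)+1=A(k)$.
\item If $v_d(k)=0$, then $\omega(k)=0$, and $I(k)=I(k-1)+1\leq A(k-1)+1=A(k)$.
\end{itemize}
Intuitively, every actuation is a reception, so the last reception is never older than the last actuation. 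Applying this under $\pi^{*}$, then taking expectations and the time average, gives $\bar{I}^{*,A}\leq\bar{A}^{*,A}$. The existence and initial-state independence of these averages come from Theorem~\ref{theo:WAUniErg}\ref{theo:WAUniErg_Erg}.

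\textbf{Main obstacle.} The delicate point is technical rather than conceptual: the MDP uses the truncated age $A^{\text{tr}}$, while the AoI is untruncated and simulated. I would handle this in one of two ways. One option is to note that truncation at $A_m$ only lowers $A$, apply the same truncation to $I$ (the pointwise comparisons survive, since $\min\{\cdot,A_m\}$ is monotone), and then let $A_m\to\infty$. The other is to argue directly on the untruncated process, using the fact that the limsup averages are well defined by the unichain and ergodicity properties of Theorem~\ref{theo:WAUniErg}. A second subtlety is that $\bar{I}^{*,A}$ depends on which optimal policy is chosen when there are ties. The pathwise argument holds for any policy, however, so the chain \eqref{equ:relation_4_values} holds for every AoA-optimal $\pi^{*}$.
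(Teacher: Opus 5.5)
Your proposal is correct and follows the paper's proof: the outer inequalities come from each policy being optimal for its own metric, and the middle one from the pathwise bound $I(k)\le A(k)$. The coupling argument for always-transmit, the explicit induction for $I(k)\le A(k)$, and the remarks on truncation and ties fill in details that the paper settles by appeal to optimality and a footnote, but they do not change the route.
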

\begin{proof}
\textcolor{black}{We have i) $\bar{I}^{*,I} \leq \bar{I}^{*,A}$ and ii) $\bar{A}^{*,A} \leq \bar{A}^{*,I}$, since an optimal value is no greater than a suboptimal one.}

\textcolor{black}{Then we have iii) $\bar{I}^{*,I} \leq \bar{A}^{*,I}$ and iv) $\bar{I}^{*,A} \leq \bar{A}^{*,A}$, since $I(k) \leq A(k) \ \forall k$\footnote{This holds since the system is action-instantaneous, meaning data is either actuated or dropped upon reception. If there is a queue within the receiver, it does not hold. For a detailed discussion, refer to \cite{nikkhah2026semantics}.}, and then the same holds for their averages. Combining i), iv), and ii), we obtain (\ref{equ:relation_4_values}).}
\end{proof}

\begin{figure*}[t]
    \centering
    \begin{minipage}{0.08\textwidth}
        \centering
        \includegraphics[width=\linewidth]{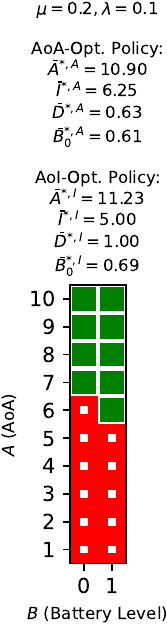}
    \end{minipage}\hfill
    \begin{minipage}{0.08\textwidth}
        \centering
        \includegraphics[width=\linewidth]{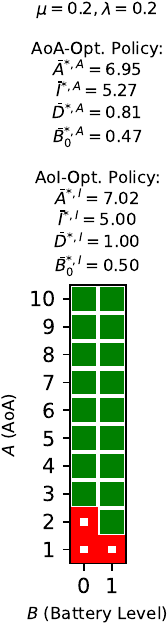}
    \end{minipage}\hfill
    \begin{minipage}{0.08\textwidth}
        \centering
        \includegraphics[width=\linewidth]{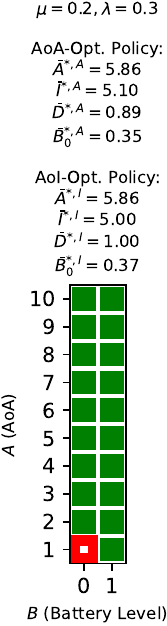}
    \end{minipage}\hfill
    \begin{minipage}{0.08\textwidth}
        \centering
        \includegraphics[width=\linewidth]{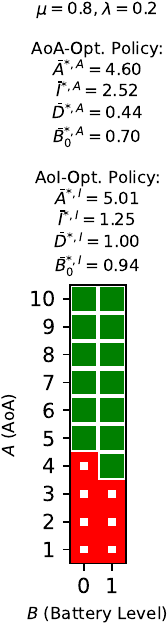}
    \end{minipage}\hfill
    \begin{minipage}{0.08\textwidth}
        \centering
        \includegraphics[width=\linewidth]{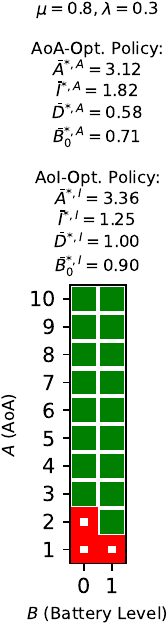}
    \end{minipage}\hfill
    \begin{minipage}{0.08\textwidth}
        \centering
        \includegraphics[width=\linewidth]{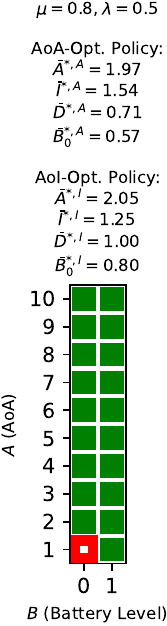}
    \end{minipage}\hfill
    \begin{minipage}{0.24\textwidth}
        \centering
        \includegraphics[width=\linewidth]{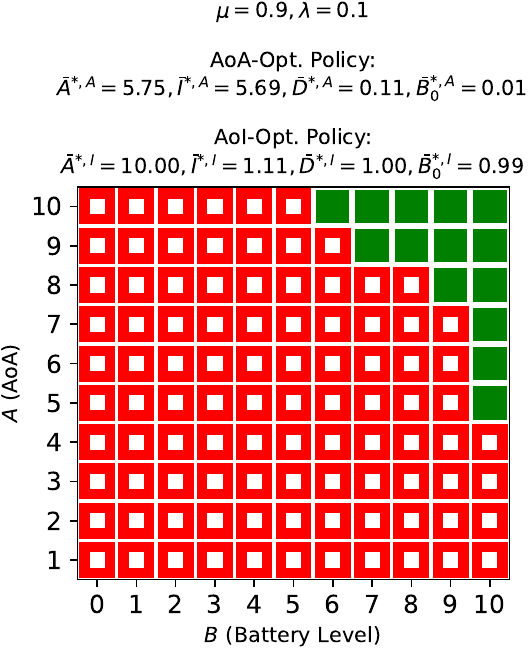}
    \end{minipage}\hfill
    \begin{minipage}{0.24\textwidth}
        \centering
        \includegraphics[width=\linewidth]{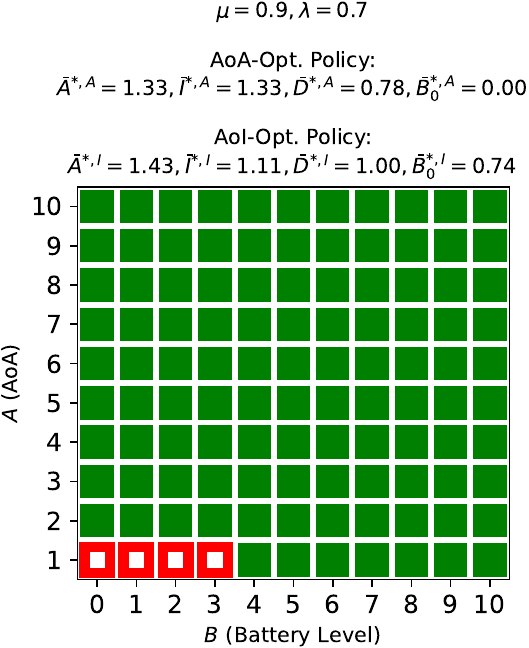}
    \end{minipage}

    \caption{The AoA-Opt. policies and the values of the average AoA, the average AoI, the transmission rate, and the probability of an empty battery under both AoA-Opt. and AoI-Opt. \textcolor{black}{policies} for different values of $\mu$ and $\lambda$, with $B_m=1$ (left side) and $B_{m}=10$ (right side).}
    \label{fig:Optimal_Policies}
\end{figure*}

\begin{figure*}[h!]
\centering

\begin{tikzpicture}

\node[inner sep=0pt] (fig) {
\begin{minipage}{\textwidth}
\centering

\begin{minipage}{0.24\textwidth}
\centering
\includegraphics[width=\linewidth]{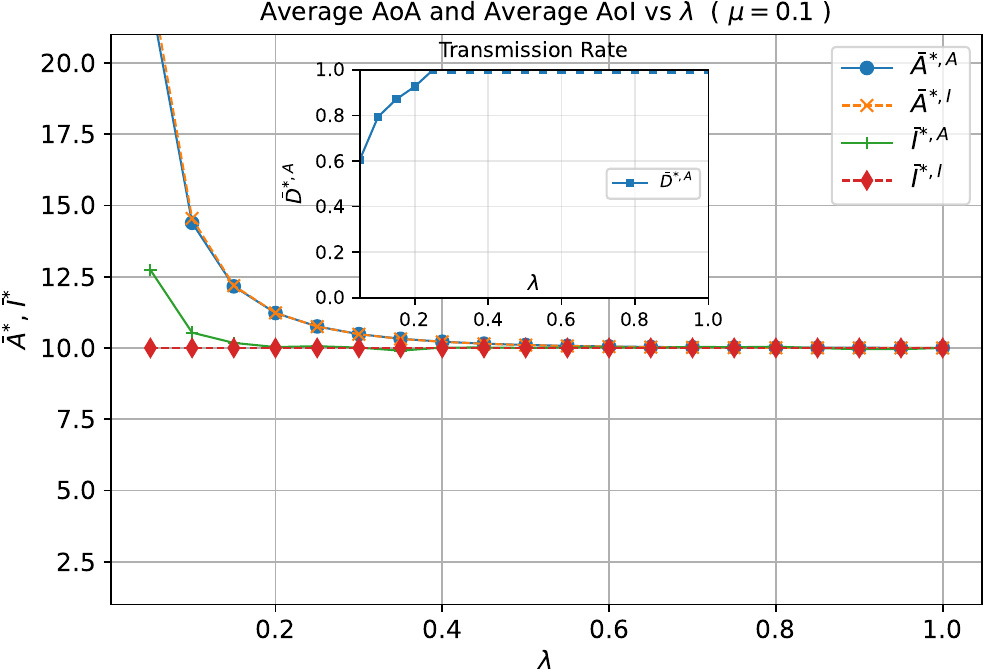}
\end{minipage}\hfill
\begin{minipage}{0.24\textwidth}
\centering
\includegraphics[width=\linewidth]{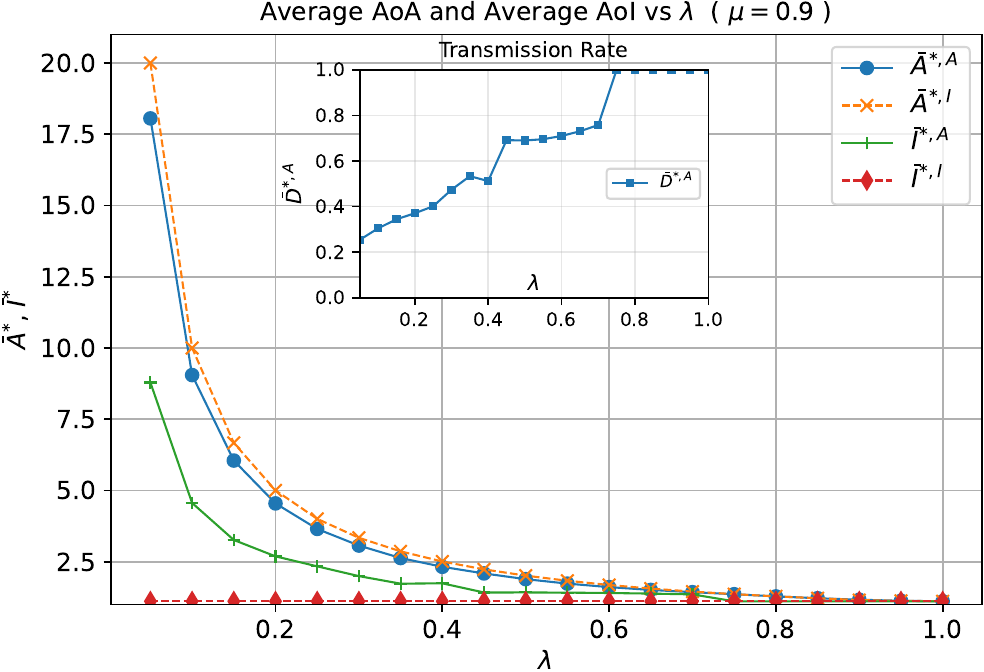}
\end{minipage}\hfill
\begin{minipage}{0.24\textwidth}
\centering
\includegraphics[width=\linewidth]{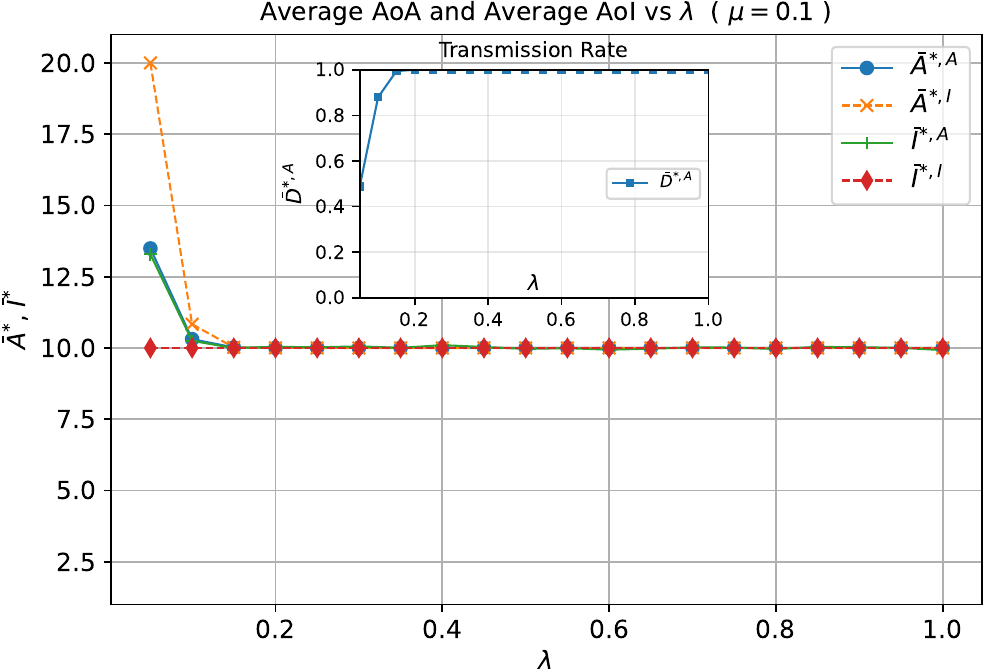}
\end{minipage}\hfill
\begin{minipage}{0.24\textwidth}
\centering
\includegraphics[width=\linewidth]{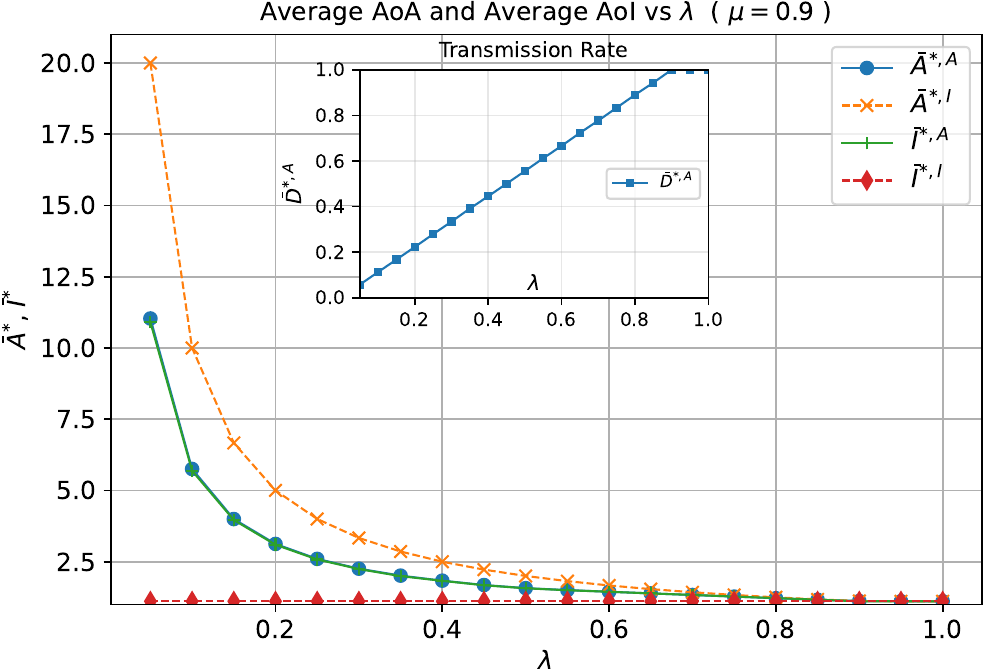}
\end{minipage}

\vspace{2mm}

\begin{minipage}{0.24\textwidth}
\centering
\includegraphics[width=\linewidth]{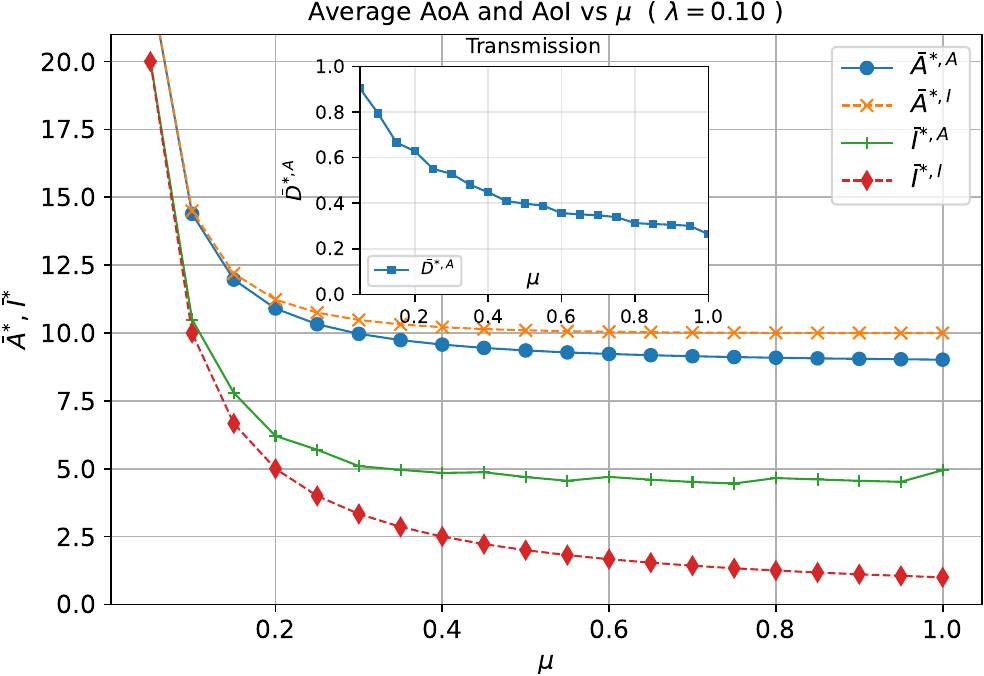}
\end{minipage}\hfill
\begin{minipage}{0.24\textwidth}
\centering
\includegraphics[width=\linewidth]{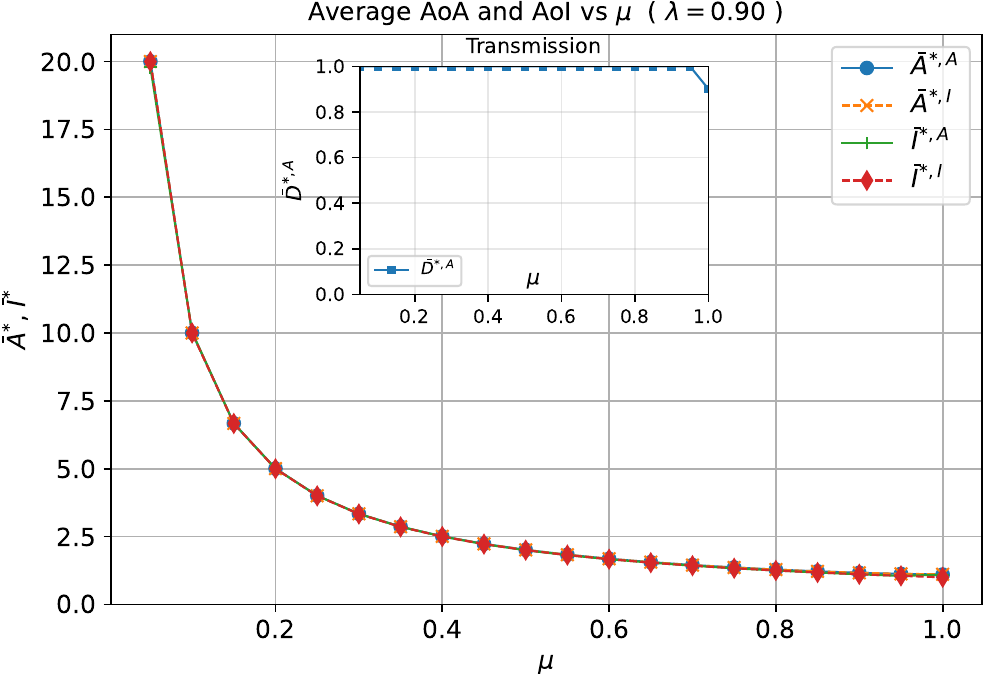}
\end{minipage}\hfill
\begin{minipage}{0.24\textwidth}
\centering
\includegraphics[width=\linewidth]{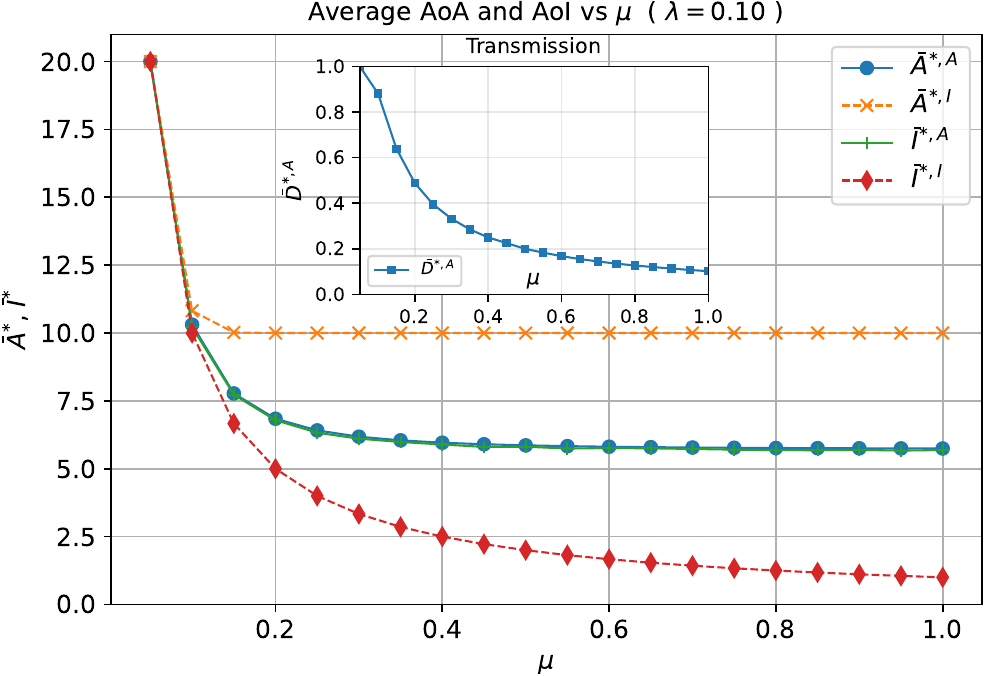}
\end{minipage}\hfill
\begin{minipage}{0.24\textwidth}
\centering
\includegraphics[width=\linewidth]{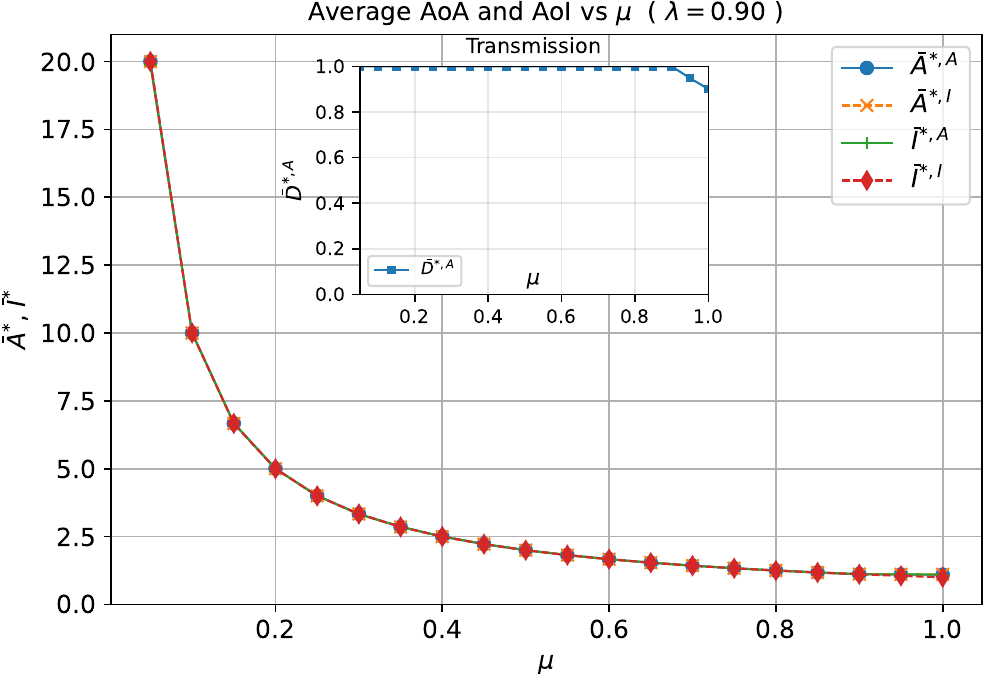}
\end{minipage}

\end{minipage}
};

\draw[gray, dotted, line width=1pt]
  ($(fig.north) + (0,0)$)
  --
  ($(fig.south) + (0,0)$);

\draw[gray, dotted, line width=0.6pt]
  ($(fig.west) + (0,0cm)$)
  --
  ($(fig.east) + (0,0cm)$);

\end{tikzpicture}

\caption{The average AoA and the average AoI under the AoA-Opt.\ and AoI-Opt.\ policies versus $\lambda$ for two values of $\mu$ (upper side), and versus $\mu$ for two values of $\lambda$ (lower side), with $B_m=1$ (left side) and $B_m=10$ (right side), along with the corresponding transmission rate under the AoA-Opt.\ policy in the inset figures.}
\label{fig:Compare_MDP_MD_AoA_and_AoI_l12}
\end{figure*}

\begin{figure}
    \centering
    \includegraphics[width=\linewidth]{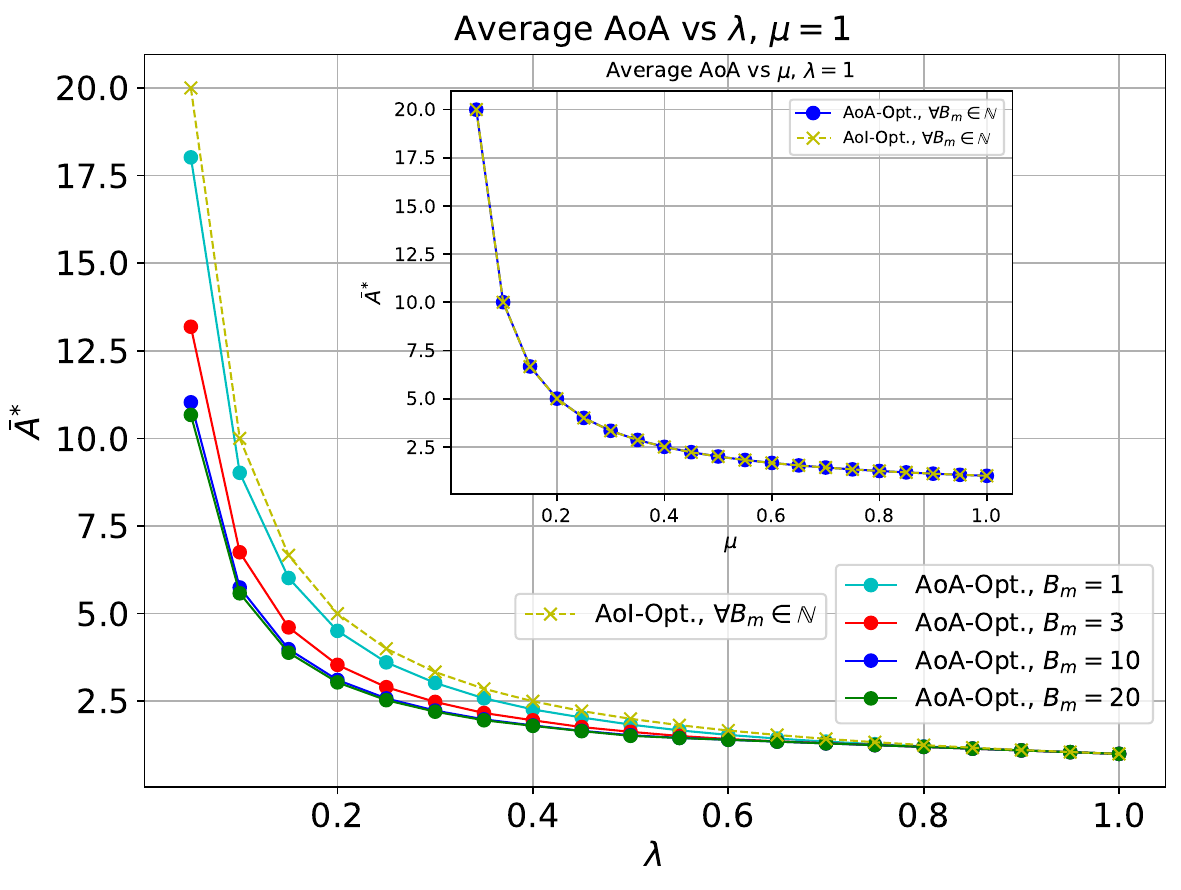}
    \caption{\textcolor{black}{The average AoA under the AoA-Opt.\ and AoI-Opt.\ policies versus $\lambda$ for $\mu=1$ (main figure) and versus $\mu$ for $\lambda=1$ (inset), for different battery capacities.}}
    \label{fig:Compare_MDP_MP_AoA_main_B_1_3_10_20_with_inset_B_1_A_m_150}
\end{figure}

\section{Results} \label{Results}
In this section, we present and compare the average AoA and average AoI under the AoA-Opt. and AoI-Opt. policies, while also considering the transmission rate as a potential cost, and provide interpretation and analysis.

In the AoA-Opt. policy, the optimal average AoA, $\bar{A}^{*,A}$, is obtained by solving the MDP described in Section \ref{MDP Characterization} using the RVI algorithm (Algorithm \ref{alg:RVI}), and validated through simulations. The RVI is \textcolor{black}{run} with a convergence threshold of $\epsilon=0.01$ and a truncation level of $A_m=300$. In addition, the suboptimal average AoI, i.e., $\bar{I}^{*,A}$, transmission rate (\ref{equ:active_fraction}), and probability of an empty battery (\ref{equ:empty_battery}) are computed by simulating the system under the obtained optimal policy for each pair of $\mu$ and $\lambda$. More details are in Section \ref{Age of Actuation (AoA) Optimization}.

For the AoI-Opt. policy, the optimal average AoI is simply $\bar{I}^{*,I}=\frac{1}{\mu}$, which can also be confirmed by solving a trivial separate MDP. The suboptimal average AoA is derived from \textcolor{black}{(\ref{equ:Average_AoA_MC_body_general}) and (\ref{equ:Average_AoA_MC_body_equal})}. The transmission rate is $\bar{D}^{*,I}=1$, and the probability of an empty battery is $\bar{B}_{0}^{*,I}=\frac{1-r}{1-r^{B_{m}+1}}$ \textcolor{black}{for $\mu \neq \lambda$, and $1/(B_{m}+1)$ for $\mu=\lambda$}, all as discussed in Section \ref{Age of Information (AoI) Optimization}.

\subsection{Illustration of the Results} \label{Illustration of the Results}
Fig.~\ref{fig:Optimal_Policies} illustrates the AoA-Opt. policy (i.e., the action taken at each state), along with the average AoA, average AoI, the transmission rate, and the probability of an empty battery for both the AoA-Opt. policy and AoI-Opt. policy, for two values of battery capacity: $B_{m}=1$ on the left and $B_{m}=10$ on the right. For $B_{m}=1$, results are shown for six pairs of $\mu$ and $\lambda$, while for $B_{m}=10$, results are shown for two pairs. The optimal action at each state is represented as follows: solid green squares indicate transmission ($d=1$), while red-outlined (hollow) squares indicate no transmission ($d=0$).

Fig.~\ref{fig:Compare_MDP_MD_AoA_and_AoI_l12} compares the average AoA and average AoI under both the AoA-Opt.\ policy and the AoI-Opt.\ policy. The results are organized into two groups: $B_{m}=1$ (left) and $B_{m}=10$ (right). Each group is further divided into two subgroups: one showing results for two values of $\mu$ versus $\lambda$ (top), and one for two values of $\lambda$ versus $\mu$ (bottom).

Fig.~\ref{fig:Compare_MDP_MP_AoA_main_B_1_3_10_20_with_inset_B_1_A_m_150} shows the average AoA under both the AoA-Opt.\ and AoI-Opt.\ policies for different values of $B_{m}$, with the main figure depicting results for $\mu=1$ versus $\lambda$, and the inset figure depicting results for $\lambda=1$ versus $\mu$.

\begin{figure}[t]
    \centering

        \begin{subfigure}[b]{0.49\linewidth}
        \centering
        \includegraphics[width=\linewidth]{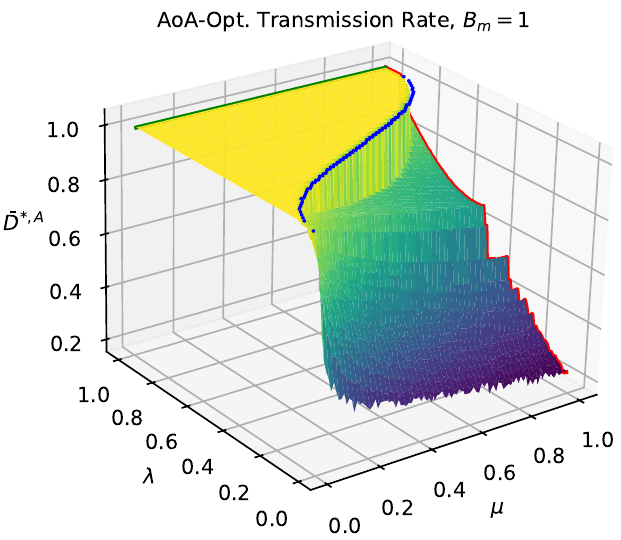}
        \caption{$B_m = 1$}
        \label{fig:Active_Fraction_Bm1}
    \end{subfigure}
    \hfill
    \begin{subfigure}[b]{0.49\linewidth}
        \centering
        \includegraphics[width=\linewidth]{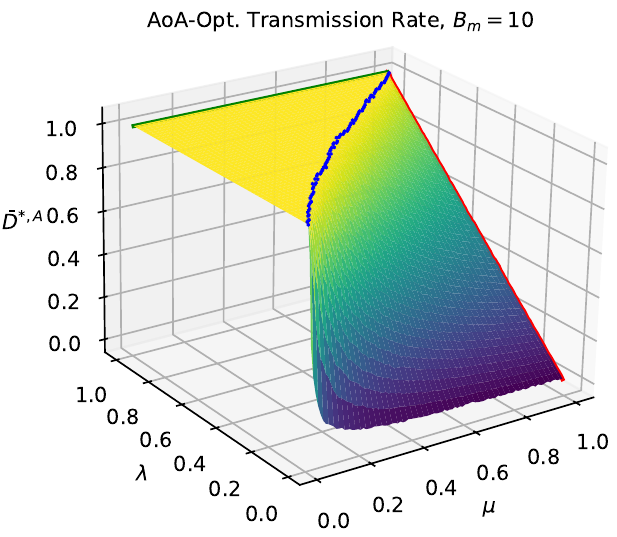}
        \caption{$B_m = 10$}
        \label{fig:Active_Fraction_Bm10}
    \end{subfigure}

    \caption{The transmission rate under AoA-Opt. policy versus $\mu$ and $\lambda$ for two values of $B_m$.}
    \label{fig:Active_Fraction}
\end{figure}

\begin{figure}
    \centering
    \includegraphics[width=\linewidth]{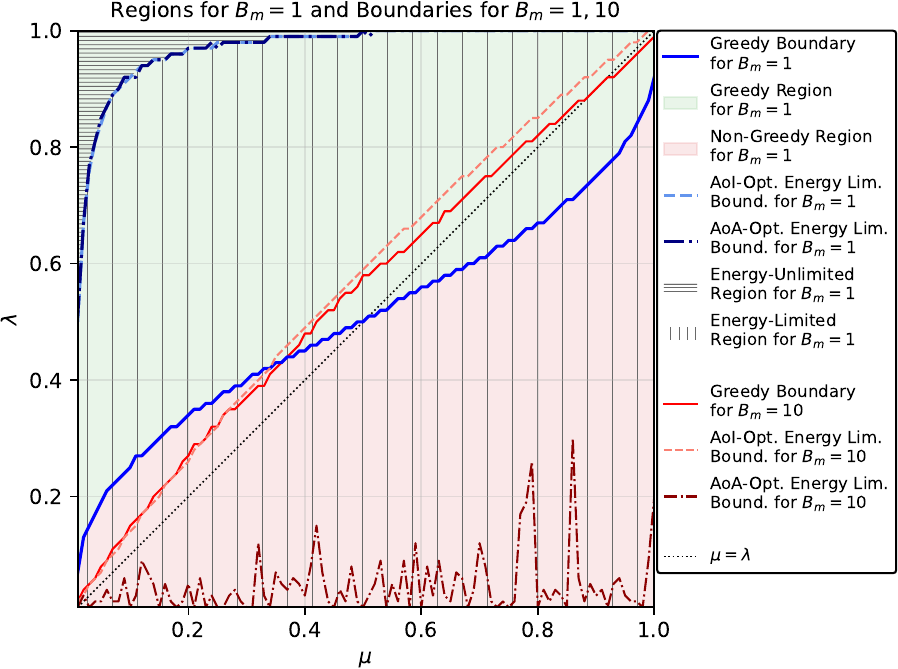}
    \caption{Greedy and non-greedy regions and energy-limited and energy-unlimited regions for $B_{m}=1$ and the corresponding boundaries for $B_{m}=1$ and $B_{m}=10$.}
    \label{fig:Cost_Regions}
\end{figure}

\begin{figure}
    \centering
    \includegraphics[width=0.9\linewidth]{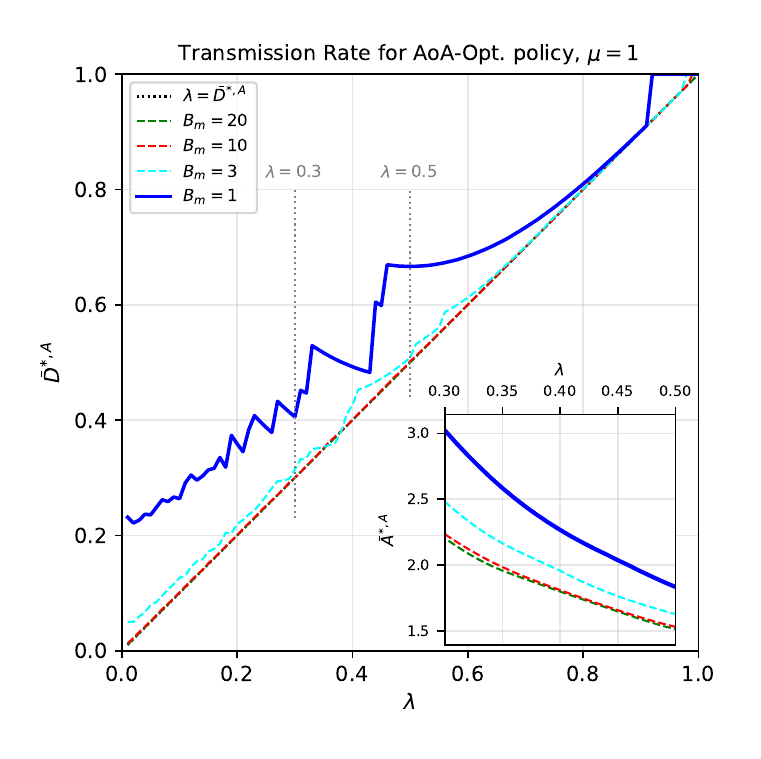}
    \caption{Transmission rate for $\mu=1$ for different values of $B_{m}$.}
    \label{fig:Pr_Trans_Only_l1_1}
\end{figure}

Fig.~\ref{fig:Active_Fraction} presents the transmission rate~(\ref{equ:active_fraction}) obtained via simulation under the AoA-Opt.\ policy, plotted versus $\mu$ and $\lambda$ (each with step size $0.01$), over $10^{5}$ time slots for each of the $10^{4}$ $(\mu, \lambda)$ pairs, for $B_{m}=1$ and $B_{m}=10$. The blue curve separates the region where $\bar{D}^{*,A} \approx 1$ from the region where $\bar{D}^{*,A} < 1$. The green curve corresponds to $\lambda=1$ and the red curve corresponds to $\mu=1$.

Fig.~\ref{fig:Cost_Regions} shows three boundaries for each of $B_{m}=1$ and $B_{m}=10$: i) the boundary separating the regions $\bar{D}^{*,A} \approx 1$ and $\bar{D}^{*,A} < 1$ (the blue curve in Fig.~\ref{fig:Active_Fraction}), ii) the boundary separating regions $\bar{B}_{0}^{*,A} \approx 0$ and $\bar{B}_{0}^{*,A} > 0$, and iii) the boundary separating regions $\bar{B}_{0}^{*,I} \approx 0$ and $\bar{B}_{0}^{*,I} > 0$, along with the two resulting pairs of separated regions for i) and for ii) and iii) together\footnote{The boundaries coincide for $B_{m}=1$.}. The data for i) and ii) are extracted from the same data as Fig.~\ref{fig:Active_Fraction}, and iii) is computed numerically from the expression of $\bar{B}_{0}^{*,I}$ discussed in Section \ref{Age of Information (AoI) Optimization}.

Fig.~\ref{fig:Pr_Trans_Only_l1_1} shows the transmission rate for $\mu=1$ under the AoA-Opt.\ policy (corresponding to the red curve in Fig.~\ref{fig:Active_Fraction}) for different values of $B_{m}$, obtained by running $10^7$ time slots, yielding a more accurate illustration compared to Figs.~\ref{fig:Active_Fraction} and~\ref{fig:Cost_Regions}. The main figure also contains an inset showing the average AoA under the AoA-Opt.\ policy for $\mu=1$ and for those different values of $B_{m}$, over the range $\lambda \in [0.3, 0.5]$.

\subsection{Analysis of the Results} \label{Analysis of the Results}
In this section, we analyze and interpret the results in terms of timeliness and cost.
Timeliness is measured by the average AoA and average AoI, while cost refers to the transmission rate required for sensing and transmission.
This cost is not explicitly modeled in the problem, as it can be assumed to be negligible compared to the energy consumption associated with actuation and/or of a fundamentally different nature, and is therefore not directly comparable to the actuation energy cost.

Figs.~\ref{fig:Optimal_Policies}, \ref{fig:Compare_MDP_MD_AoA_and_AoI_l12}, and~\ref{fig:Compare_MDP_MP_AoA_main_B_1_3_10_20_with_inset_B_1_A_m_150} show that the average AoA under the AoA-Opt.\ policy is \textcolor{black}{lower than or equal to that} under the AoI-Opt.\ policy, and conversely, the average AoI under the AoI-Opt.\ policy is \textcolor{black}{lower than or equal to that} under the AoA-Opt.\ policy.
\textcolor{black}{Also, the relation among the four corresponding curves in the plots of Fig. \ref{fig:Compare_MDP_MD_AoA_and_AoI_l12} is consistent with (\ref{equ:relation_4_values}).}

\textbf{Greedy and Non-Greedy Regions} are where the AoA-Opt. policy tends to the AoI-Opt. policy and where it does not, respectively. As the system goes into the greedy\footnote{Since in this region the AoA-Opt. policy sets the transmitter to \textit{greedily} transmit at all states, and consequently at all time slots (which is equivalent to the AoI-Opt. policy), we call this the \textit{greedy} region.} region ($\bar{D}^{*,A} \rightarrow 1$, as seen from the dominance of green actions in Fig.~\ref{fig:Optimal_Policies} and the surface approaching $1$ in Fig.~\ref{fig:Active_Fraction}), the average AoA values under the two policies coincide with each other ($\bar{A}^{*,A} \rightarrow \bar{A}^{*,I}$, the blue and orange curves coinciding in Fig. \ref{fig:Compare_MDP_MD_AoA_and_AoI_l12}), and so do the average AoI values ($\bar{I}^{*,A} \rightarrow \bar{I}^{*,I}$, the green and red curves coinciding in Fig. \ref{fig:Compare_MDP_MD_AoA_and_AoI_l12}). This corresponds to the green region in Fig.~\ref{fig:Cost_Regions}, which is separated from the orange region by the blue boundary for $B_{m}=1$. Notably, this generally occurs for relatively higher values of $\lambda$ compared to $\mu$, as can be seen in Figs.~\ref{fig:Active_Fraction} and~\ref{fig:Cost_Regions}. \textcolor{black}{We observe that for $B_{m}=10$, the boundary is already close to the line $\mu=\lambda$.} However, for $B_{m}=1$, there is an interesting symmetry around the point $\mu=\lambda=0.5$, as seen in Fig.~\ref{fig:Cost_Regions}.

\textbf{Energy-unlimited and Energy-Limited Regions} are where the values of AoA and AoI themselves coincide and where they do not, respectively. An energy-unlimited region is where the probability of an empty battery asymptotically approaches zero\footnote{The probability of an empty battery never reaches absolute zero for a finite-sized battery.}, either under the AoI-Opt. policy ($\bar{A}^{*,I} \rightarrow \bar{I}^{*,I}$, the orange and red curves coinciding in Fig. \ref{fig:Compare_MDP_MD_AoA_and_AoI_l12}) or the AoA-Opt. policy ($\bar{A}^{*,A} \rightarrow \bar{I}^{*,A}$, the blue and green curves coinciding in Fig. \ref{fig:Compare_MDP_MD_AoA_and_AoI_l12}).
\textcolor{black}{By (\ref{equ:relation_4_values}), the former implies the latter, and Fig.~\ref{fig:Compare_MDP_MD_AoA_and_AoI_l12} shows that the converse does not hold.} Fig. \ref{fig:Cost_Regions} shows that for $B_{m}=1$ these two energy limitation regions coincide, while they do not for $B_{m}=10$, and as noted above, the energy-unlimited region of the AoA-Opt. policy contains the energy-unlimited region of the AoI-Opt. policy.

Fig.~\ref{fig:Compare_MDP_MP_AoA_main_B_1_3_10_20_with_inset_B_1_A_m_150} shows that when $\mu=1$, the average AoA is indifferent to the battery capacity under the AoI-Opt.\ policy, whereas under the AoA-Opt.\ policy it is lower and decreases further as the battery capacity increases. The inset shows that when $\lambda=1$, the AoA reduces to the AoI and their averages are equal ($\frac{1}{\mu}$), both independent of the battery capacity.

Also, Fig.~\ref{fig:Compare_MDP_MD_AoA_and_AoI_l12} shows that when the battery capacity is larger, such as $B_{m}=10$, under the AoA-Opt.\ policy, the average AoA is very close to the average AoI. This is due to the fact that, to keep the AoA low, \textcolor{black}{the AoA-Opt.\ policy maintains sufficient energy for actuation and avoids emptying the battery}; consequently, very low values of $\bar{B}_{0}^{*,A}$ are observed in Fig.~\ref{fig:Optimal_Policies} for $B_{m}=10$, compared to the very high values of $\bar{B}_{0}^{*,I}$. This is also evident in Fig. \ref{fig:Cost_Regions}, as for $B_{m}=10$, the boundaries of energy limitation (above the boundaries are the energy-unlimited regions) are totally different, and most of the plane is energy-unlimited under AoA-Opt. policy. However, for lower battery capacities, such as $B_{m}=1$, maintaining a non-empty battery becomes harder, as again observed in Fig.~\ref{fig:Optimal_Policies}, and the average AoA under the AoA-Opt.\ policy tends toward that under the AoI-Opt.\ policy, since \textcolor{black}{a unit battery leaves little room to store energy for actuation}. Fig. \ref{fig:Cost_Regions} confirms this for $B_{m}=1$ by showing that the boundaries of energy limitation for the two policies coincide.

Fig.~\ref{fig:Pr_Trans_Only_l1_1} illustrates how the transmission rate self-regulates based on the energy arrival rate when optimizing the average AoA, analyzed here for a channel success rate of $\mu=1$. \textcolor{black}{We observe that as $B_{m}$ increases, $\bar{D}^{*,A}$ approaches $\lambda$}, meaning the transmission rate approaches the energy arrival rate as the battery capacity is relaxed. However, for $B_{m}=1$, the behavior is highly non-linear. First, the transmission rate is generally above the line $\lambda=\bar{D}^{*,A}$, indicating that a low battery capacity requires a higher transmission rate to compensate, unlike the case for higher battery capacities. Second, there is a counterintuitive non-monotonicity: for example, over the range $\lambda \in [0.3, 0.5]$, as $\lambda$ increases linearly and the optimal average AoA shown in the inset decreases monotonically, the corresponding transmission rate first rises, then drops, then rises sharply again before saturating.

\section{Conclusion}
We studied transmission scheduling for an energy-harvesting receiver whose actuations require both data and energy. We formulated average-AoA minimization as a Markov decision process and characterized an AoA-threshold optimal policy. We also derived the average AoA in closed form under the AoI-optimal policy, which is always-transmit. Comparing these policies shows that optimizing information freshness alone need not optimize actuation freshness, and that the two policies coincide only for some pairs of channel success and energy-arrival probabilities. State-dependent scheduling can improve average AoA and reduce transmission attempts, although it may increase the average AoI. The numerical results also reveal non-monotonic transmission rates under the AoA-optimal policy for a one-unit battery, illustrating that improved actuation freshness does not require a monotonically increasing transmission rate. An analytical characterization of the observed battery-threshold structure remains an open question.

\begin{appendices}
\section{Proof of Theorem~\ref{theo:WAUniErg}} \label{App:WAUniErg}
\noindent
\textcolor{black}{Here, we successively prove the elements of Theorem~\ref{theo:WAUniErg}.}
\subsection{Proof of Theorem~\ref{theo:WAUniErg}\ref{theo:WAUniErg_unichain}} \label{App:WAUniErg_unichain}
A unichain policy is one whose corresponding induced Markov chain contains a single recurrent class (and possibly some transient states)\cite[Section 5.2]{bertsekas2012dynamic}.
We now show that every stationary policy of the presented MDP is unichain.

The state $(A_m,B_m)$ is accessible from any state under any policy. We divide the set of all $|\Pi|=2^{A_m \times \textcolor{black}{(B_{m}+1)}}$ policies into two subsets: those with $d_{(A_m,B_m)}=0$ and those with $d_{(A_m,B_m)}=1$. In the first subset where $d_{(A_m,B_m)}=0$, the recurrent class contains only the state $(A_m,B_m)$, since it absorbs the chain upon entry. In the second subset where $d_{(A_m,B_m)}=1$, all states $(A,B)$ where $A\in \{1,\cdots,A_m\}$ and $B \in \{B_m-1,B_m\}$ are in the recurrent class. If at least one state $(A,B)$ where $A\in \{1,\cdots,A_{m}\}$ and $B=B_{m}-1$ has $d_{(A,B)}=1$, then all states $(A,B)$ where $A\in \{1,\cdots,A_{m}\}$ and $B=B_{m}-2$ are in the recurrent class as well. Continuing this argument, if at least one state $(A,B)$ where $A\in \{1,\cdots,A_{m}\}$ and $B=B_{m}-2$ has $d_{(A,B)}=1$, then all states $(A,B)$ where $A\in \{1,\cdots,A_{m}\}$ and $B=B_{m}-3$ are also in the recurrent class, and so forth down to $B=0$. Note that this connection must be consistent: for example, if there is at least one state with $d=1$ for each group from $B=B_m-1$ down to $B=\hat{B}+1$, but no state with $d=1$ among states with $B=\hat{B}$, and again at least one state with $d=1$ among states with $B=\hat{B}-1$, then the recurrent class contains only states $(A,B)$ where $A\in \{1,\cdots,A_{m}\}$ and $B\in \{\hat{B},\cdots,B_m\}$, and not states with $B=\hat{B}-1$, $B=\hat{B}-2$, or below. We can describe the set of states in the recurrent class based on the policy by
\begin{equation}
\begin{aligned}
&\mathbb{RC}=\\
&\begin{cases}
(A_{m},B_{m})  &  d_{(A_{m},B_{m})}=0,\\[2ex]
\begin{aligned}
&(\{A\in \{1,\cdots,A_{m}\},\\
&B\in \{\hat{B},\cdots,B_m\}\})
\end{aligned}
&
\begin{aligned}
&(d_{(A_{m},B_{m})}=1) \land \\
&(\hat{B}=\min\{\textcolor{black}{0},\cdots,\textcolor{black}{B_{m}-1}\}\\
&| \ \{ \forall B=\{\hat{B}+1,\cdots,B_{m}\}\\
& \exists A \ | \ d_{(A,B)}=1\}).\\
\end{aligned}
\end{cases}
\end{aligned}
\end{equation}
As shown above, every policy induces a single recurrent class, and is therefore unichain. Thus, all stationary policies of the presented MDP are unichain. \hfill \qed

\subsection{Proof of Theorem~\ref{theo:WAUniErg}\ref{theo:WAUniErg_WA}} \label{App:WAUniErg_WA}
For an MDP, if all stationary policies are unichain, then the weak accessibility condition holds \cite[Proposition 5.2.5]{bertsekas2012dynamic}. By \textcolor{black}{Theorem~\ref{theo:WAUniErg}\ref{theo:WAUniErg_unichain}}, all stationary policies of the presented MDP are unichain, and therefore the MDP is weakly accessible. \hfill \qed

\subsection{Proof of Theorem~\ref{theo:WAUniErg}\ref{theo:WAUniErg_Erg}} \label{App:WAUniErg_Erg}
Since all policies are unichain, the optimal ones are also unichain \cite[Proposition 5.2.6]{bertsekas2012dynamic}, and their induced Markov chains contain exactly one recurrent class with possibly some transient states. This alone guarantees the existence of a unique stationary distribution for each induced Markov chain. Furthermore, since the state $(A_m,B_m)$ belongs to the recurrent class of all induced Markov chains and can transition to itself in one step in all of them \textcolor{black}{(w.p. $1$ under $d=0$ and w.p. $\bar{\mu}>0$ under $d=1$, since $\mu<1$)}, its period is $1$, and it is therefore aperiodic. Since aperiodicity is a class property, if one state in a finite recurrent class is aperiodic, then all states in the class are aperiodic, and the class itself is ergodic. Thus, the recurrent class is ergodic and the induced Markov chain converges to its unique stationary distribution\footnote{The stationary probability of any transient state is $0$.} \cite{norris1997markov}. \hfill \qed

\subsection{Proof of Theorem~\ref{theo:WAUniErg}\ref{theo:WAUniErg_Indep}} \label{App:WAUniErg_Indep}
\textcolor{black}{By Theorem~\ref{theo:WAUniErg}\ref{theo:WAUniErg_WA}, the presented MDP is weakly accessible, and therefore the optimal average cost is the same for all initial states \cite[Proposition 5.2.3]{bertsekas2012dynamic}. By Theorem~\ref{theo:WAUniErg}\ref{theo:WAUniErg_unichain}, all stationary policies are unichain, and by Theorem~\ref{theo:WAUniErg}\ref{theo:WAUniErg_Erg}, their recurrent classes are aperiodic. Under these conditions, the RVI algorithm converges \cite[Theorem 8.5.4 and Section 8.5.5]{puterman1994markov}. Therefore, Algorithm~\ref{alg:RVI} converges and yields the optimal average cost (average AoA) in (\ref{equ:Average_AoA_MDP}), independently of the starting state.} \hfill \qed

\subsection{Proof of Theorem~\ref{theo:WAUniErg}\ref{theo:WAUniErg_Irre}} \label{App:WAUniErg_Irre}
By Theorem~\ref{theo:threshold_A}, for each battery level $B$, the optimal policy has a threshold structure in age $A$: once $d_{(A,B)}=1$ is observed for some $A$, all subsequent states with higher $A$ also have $d=1$. \textcolor{black}{We assume that $A_m$ is large enough that, for each $B$, the optimal policy has at least one state with $d=1$. For $A_m=300$, this holds for all optimal policies obtained in Section~\ref{Results} for $0<\mu,\lambda<1$, including the lowest battery levels and the lowest values of $\mu$ and $\lambda$.} Therefore, by the argument in Appendix \ref{App:WAUniErg_unichain}, all states are captured in the recurrent class, and the induced Markov chains from the optimal policies have no transient states, making them fully irreducible. This strengthens the results of \textcolor{black}{Theorems~\ref{theo:WAUniErg}\ref{theo:WAUniErg_Erg} and \ref{theo:WAUniErg}\ref{theo:WAUniErg_Exist}}, for optimal policies.

\subsection{Proof of Theorem~\ref{theo:WAUniErg}\ref{theo:WAUniErg_Exist}} \label{App:WAUniErg_Exist}
By \textcolor{black}{Theorem~\ref{theo:WAUniErg}\ref{theo:WAUniErg_Erg}}, the induced Markov chains from all policies, including the optimal ones, are ergodic. Therefore, the stationary distributions exist and are unique, and consequently the probabilities (\ref{equ:active_fraction}) and (\ref{equ:empty_battery}) exist and are independent of the starting state. \hfill \qed

\section{Proof of Lemma \ref{lemma:increasing}} \label{App:lemma:increasing}

\subsection{Proof of Lemma \ref{lemma:increasing_A}} \label{App:lemma:increasing_A}
By mathematical induction, if we show (i) the property holds for the first iteration of the RVI, i.e., $k=0$, and (ii) if it holds for the $k$-th iteration (\ref{equ:increasing_A}), then it also holds for the $(k+1)$-th iteration, it follows that the property holds for all iterations, including the converged iteration corresponding to the optimal cost function.

To verify (i), since we initialize such that $h(s), h_{\text{new}}(s) \leftarrow 0, \ \forall s \in S$, it holds. To verify (ii), let $s_{1}\equiv s_{(A_1,B)}$ and $s_{2} \equiv s_{(A_2,B)}$ be two states where $A_1 \leq A_2$. Then, we define

\begin{equation} \label{alpha}
\alpha_{i,j} = A_{i} + \sum_{s' \in S} p_{s_{i},d=j,s_{i}'} h^{k}(s_{i}'), \ i \in \{1,2\}, \ j \in \{0,1\}. 
\end{equation}

In each iteration of the RVI, we minimize over the two actions, $d=0$ and $d=1$, to obtain the updated cost function for each state. Based on the definition in (\ref{alpha}), the updates for $s_1$ and $s_2$ are given by
\begin{equation} \label{equ:min_hs_A}
h_{s_{1}}^{k+1}=\min\{\alpha_{1,0},\alpha_{1,1}\}, \
h_{s_{2}}^{k+1}=\min\{\alpha_{2,0},\alpha_{2,1}\},
\end{equation} 
respectively.

We compare the update terms in (\ref{alpha}) with $j=0$ (no transmission) together and with $j=1$ (transmission) together, across three cases of the battery.
\textcolor{black}{Based on (\ref{equ:BellmanEqu_B_0}), (\ref{equ:BellmanEqu_B_0_B_B_m}), and (\ref{equ:BellmanEqu_B_m})}, for $\boldsymbol{B=0}$
\begin{subequations} \label{For_A_1}
\begin{align*}
\alpha_{1,0}=A_1+&\lambda h_{(\textcolor{black}{A_{1}^{\text{tr}}},1)} + \bar{\lambda} h_{(\textcolor{black}{A_{1}^{\text{tr}}},0)},\numberthis\label{For_A_1_10}\\
\alpha_{2,0}=A_2+&\lambda h_{(\textcolor{black}{A_{2}^{\text{tr}}},1)} + \bar{\lambda} h_{(\textcolor{black}{A_{2}^{\text{tr}}},0)},\numberthis\label{For_A_1_20}\\
\alpha_{1,1}=A_1+&w h_{(1,0)} + \bar{\lambda} h_{(\textcolor{black}{A_{1}^{\text{tr}}},0)}+ y h_{(\textcolor{black}{A_{1}^{\text{tr}}},1)},\numberthis\label{For_A_1_11}\\
\alpha_{2,1}=A_2+&w h_{(1,0)} + \bar{\lambda} h_{(\textcolor{black}{A_{2}^{\text{tr}}},0)} + y h_{(\textcolor{black}{A_{2}^{\text{tr}}},1)},\label{For_A_1_21}\numberthis
\end{align*}
\end{subequations}
for $\boldsymbol{0 < B < B_m}$
\begin{subequations} \label{For_A_2}
\begin{align*}
\alpha_{1,0}=A_1+&\lambda h_{(\textcolor{black}{A_{1}^{\text{tr}}},B+1)} + \bar{\lambda} h_{(\textcolor{black}{A_{1}^{\text{tr}}},B)},\label{For_A_2_10}\numberthis\\
\alpha_{2,0}=A_2+&\lambda h_{(\textcolor{black}{A_{2}^{\text{tr}}},B+1)} + \bar{\lambda} h_{(\textcolor{black}{A_{2}^{\text{tr}}},B)},\label{For_A_2_20}\numberthis\\
\alpha_{1,1}=A_1+&w h_{(1,B)} + x h_{(1,B-1)} + z h_{(\textcolor{black}{A_{1}^{\text{tr}}},B)}\\
+& y h_{(\textcolor{black}{A_{1}^{\text{tr}}},B+1)},\label{For_A_2_11}\numberthis\\
\alpha_{2,1}=A_2+&w h_{(1,B)} + x h_{(1,B-1)} + z h_{(\textcolor{black}{A_{2}^{\text{tr}}},B)}\\
+& y h_{(\textcolor{black}{A_{2}^{\text{tr}}},B+1)},\label{For_A_2_21} \numberthis
\end{align*}
\end{subequations}
and for $\boldsymbol{B=B_m}$
\begin{subequations}\label{For_A_3}
\begin{align*}
\alpha_{1,0}=A_1+& h_{(\textcolor{black}{A_{1}^{\text{tr}}},B_m)},\label{For_A_3_10}\numberthis\\
\alpha_{2,0}=A_2+& h_{(\textcolor{black}{A_{2}^{\text{tr}}},B_m)},\label{For_A_3_20}\numberthis\\
\alpha_{1,1}=A_1+&w h_{(1,B_m)} + x h_{(1,B_m-1)}+ \bar{\mu} h_{(\textcolor{black}{A_{1}^{\text{tr}}},B_m)},\label{For_A_3_11}\numberthis\\
\alpha_{2,1}=A_2+&w h_{(1,B_m)} + x h_{(1,B_m-1)}+ \bar{\mu} h_{(\textcolor{black}{A_{2}^{\text{tr}}},B_m)},\label{For_A_3_21}\numberthis
\end{align*}
\end{subequations}
\textcolor{black}{where $A_{i}^{\text{tr}}=\min\{A_{i}+1,A_{m}\}, \ i \in \{1,2\}$ due to the truncation of age.}
Assuming (\ref{equ:increasing_A}), one can see that, in all the above cases
\begin{equation} \label{alpha_10_leq_alpha_20___alpha_11_leq_alpha_21}
\alpha_{1,0} \leq \alpha_{2,0}, \ \alpha_{1,1} \leq \alpha_{2,1}.
\end{equation}
Then based on (\ref{alpha_10_leq_alpha_20___alpha_11_leq_alpha_21}), we can show
\begin{equation} \label{min_less_min_A}
\min\{\alpha_{1,0},\alpha_{1,1}\} \leq \min\{\alpha_{2,0},\alpha_{2,1}\}.
\end{equation}
Finally, based on (\ref{equ:min_hs_A}) and (\ref{min_less_min_A}) 
\begin{equation}
h^{k+1}_{(A_1,B)} \leq h^{k+1}_{(A_2,B)}, \ \forall B.
\end{equation}

Hence, the claim is proved, and the inequality (\ref{equ:increasing_A}) holds for all iterations, including the converged optimal cost function.
\hfill \qed

\subsection{Proof of Lemma \ref{lemma:increasing_B}}
Similar to the proof in Appendix \ref{App:lemma:increasing_A}, again, since the initialization $h(s), h_{\text{new}}(s) \leftarrow 0, \ \forall s \in S$ applies, (i) holds. For (ii), let $s_{1} \equiv s_{(A,B_1)}$ and $s_{2} \equiv s_{(A,B_2)}$ be two states such that $B_{1} \leq B_{2}$. Then, we define
\begin{equation} \label{beta}
\beta_{i,j} = A + \sum_{s^{'} \in S} p_{s_{i},d=j,s_{i}'} h^{k}(s_{i}'), \ i \in \{1,2\}, \ j \in \{0,1\}. 
\end{equation}
The updates for $s_1$ and $s_{2}$ are 
\begin{equation} \label{equ:min_hs_B}
h_{s_{1}}^{k+1}=\min\{\beta_{1,0},\beta_{1,1}\}, \ h_{s_{2}}^{k+1}=\min\{\beta_{2,0},\beta_{2,1}\}.
\end{equation} 
Now, we prove that under the assumption (\ref{equ:increasing_B}) we have $h^{k+1}_{(A,B_1)} \geq h^{k+1}_{(A,B_2)}$.

We compare the update terms in (\ref{beta}) corresponding to $j=0$ (no transmission) together and corresponding to $j=1$ (transmission) together, in four cases of the relative battery levels. For $\boldsymbol{B_1=0, \ 0<B_2<B_m}$
\begin{subequations} \label{For_B_1}
\begin{align*}
\beta_{1,0}=A+&\lambda h_{(\textcolor{black}{A^{\text{tr}}},1)} + \bar{\lambda} h_{(\textcolor{black}{A^{\text{tr}}},0)},\numberthis\label{For_B_1_10}\\
\beta_{2,0}=A+&\lambda h_{(\textcolor{black}{A^{\text{tr}}},B_2+1)} + \bar{\lambda} h_{(\textcolor{black}{A^{\text{tr}}},B_2)},\numberthis\label{For_B_1_20}\\
\beta_{1,1}=A+&w h_{(1,0)} + \bar{\lambda} h_{(\textcolor{black}{A^{\text{tr}}},0)} + y h_{(\textcolor{black}{A^{\text{tr}}},1)},\numberthis\label{For_B_1_11}\\
\beta_{2,1}=A+&w h_{(1,B_2)} + x h_{(1,B_2-1)} + z h_{(\textcolor{black}{A^{\text{tr}}},B_2)} \\
+&y h_{(\textcolor{black}{A^{\text{tr}}},B_2+1)},\label{For_B_1_21} \numberthis 
\end{align*}
\end{subequations}
for $\boldsymbol{B_1=0, \ B_2=B_m}$
\begin{subequations} \label{For_B_2}
\begin{align*}
\beta_{1,0}=A+&\lambda h_{(\textcolor{black}{A^{\text{tr}}},1)} + \bar{\lambda} h_{(\textcolor{black}{A^{\text{tr}}},0)},\label{For_B_2_10}\numberthis\\
\beta_{2,0}=A+& h_{(\textcolor{black}{A^{\text{tr}}},B_m)},\numberthis\label{For_B_2_20}\\
\beta_{1,1}=A+&w h_{(1,0)} + \bar{\lambda} h_{(\textcolor{black}{A^{\text{tr}}},0)}+ y h_{(\textcolor{black}{A^{\text{tr}}},1)},\numberthis\label{For_B_2_11}\\
\beta_{2,1}=A+&w h_{(1,B_m)} + x h_{(1,B_m-1)} + \bar{\mu} h_{(\textcolor{black}{A^{\text{tr}}},B_m)},\label{For_B_2_21}\numberthis
\end{align*}
\end{subequations}
for $\boldsymbol{0<B_1<B_m, \ 0<B_2<B_m}$
\begin{subequations} \label{For_B_3}
\begin{align*}
\beta_{1,0}=A+&\lambda h_{(\textcolor{black}{A^{\text{tr}}},B_1+1)} + \bar{\lambda} h_{(\textcolor{black}{A^{\text{tr}}},B_1)},\label{For_B_3_10}\numberthis\\
\beta_{2,0}=A+&\lambda h_{(\textcolor{black}{A^{\text{tr}}},B_2+1)} + \bar{\lambda} h_{(\textcolor{black}{A^{\text{tr}}},B_2)},\label{For_B_3_20}\numberthis\\
\beta_{1,1}=A+&w h_{(1,B_1)} + x h_{(1,B_1-1)} + z h_{(\textcolor{black}{A^{\text{tr}}},B_1)}\\
&+ y h_{(\textcolor{black}{A^{\text{tr}}},B_1+1)},\numberthis\label{For_B_3_11}\\ 
\beta_{2,1}=A+&w h_{(1,B_2)} + x h_{(1,B_2-1)} + z h_{(\textcolor{black}{A^{\text{tr}}},B_2)}\\
+& y h_{(\textcolor{black}{A^{\text{tr}}},B_2+1)},\label{For_B_3_21} \numberthis
\end{align*}
\end{subequations}
and for $\boldsymbol{0<B_1<B_m, \ B_2=B_m}$
\begin{subequations} \label{For_B_4}
\begin{align*}
\beta_{1,0}=A+&\lambda h_{(\textcolor{black}{A^{\text{tr}}},B_1+1)} + \bar{\lambda} h_{(\textcolor{black}{A^{\text{tr}}},B_1)},\label{For_B_4_10}\numberthis\\
\beta_{2,0}=A+& h_{(\textcolor{black}{A^{\text{tr}}},B_m)},\label{For_B_4_20}\numberthis\\
\beta_{1,1}=A+&w h_{(1,B_1)} + x h_{(1,B_1-1)} + z h_{(\textcolor{black}{A^{\text{tr}}},B_1)}\\
+ & y h_{(\textcolor{black}{A^{\text{tr}}},B_1+1)},\numberthis\label{For_B_4_11}\\
\beta_{2,1}=A+&w h_{(1,B_m)} + x h_{(1,B_m-1)} + \bar{\mu} h_{(\textcolor{black}{A^{\text{tr}}},B_m)},\label{For_B_4_21} \numberthis
\end{align*}
\end{subequations}
\textcolor{black}{where $A^{\text{tr}}=\min\{A+1,A_{m}\}$ due to the truncation of age.}
Given the assumption (\ref{equ:increasing_B}) and Lemma \ref{lemma:chain_inequality}, one can see that
\begin{equation} \label{beta_10_geq_beta_20___beta_11_geq_beta_21}
\beta_{1,0} \geq \beta_{2,0}, \ \beta_{1,1} \geq \beta_{2,1}
\end{equation}
holds for all the above four cases of the two battery levels.
Then based on (\ref{equ:min_hs_B}) and (\ref{beta_10_geq_beta_20___beta_11_geq_beta_21}), we can show
\begin{equation} \label{min_more_min_B}
\min\{\beta_{1,0},\beta_{1,1}\} \geq \min\{\beta_{2,0},\beta_{2,1}\}.
\end{equation}
Hence, according to (\ref{equ:min_hs_B}) and (\ref{min_more_min_B}), we can show 
\begin{equation}
h^{k+1}_{(A,B_{1})} \geq h^{k+1}_{(A,B_{2})}, \ \forall A.
\end{equation}
Therefore, the claim is established, and inequality (\ref{equ:increasing_B}) holds for all iterations, including the converged cost function.
\hfill \qed

\section{Proof of Theorem~\ref{theo:threshold_A} \textcolor{black}{and Discussion on Remark \ref{theo:threshold_B}}} \label{App:threshold}
\subsection{Proof of Theorem~\ref{theo:threshold_A}} \label{App:threshold_A}
Let us define
\begin{subequations} \label{theta}
\begin{align}  
\theta_{j}^* &= A + \sum_{s' \in S} p_{s,d=j,s'} h^*(s'), \ j \in \{0,1\},\label{theta_j}\\
\Delta \theta^*&=\theta_{1}^*-\theta_{0}^*. \label{exp:delta_theta}
\end{align}
\end{subequations}
From (\ref{exp:delta_theta}) and noting that the outcome of $\min\{\theta_{0}^{*},\theta_{1}^{*}\}$ determines the optimal action, we observe that
\begin{equation} \label{equ:d_s^*}
d_{s}^{*}=
\begin{cases}
0 & \Delta \theta^* \geq 0,\\
1 & \Delta \theta^* < 0.
\end{cases}
\end{equation}
To establish the threshold-based structural property of the optimal policy w.r.t. $A$, it suffices to show that $\Delta \theta$ is non-increasing in $A$. Then, (\ref{structure_A}) follows.
To show that $\Delta \theta$ is non-increasing in $A$, it suffices to show that $\Delta \alpha_{2}^{*} - \Delta \alpha_{1}^{*} \leq 0$, where
\begin{equation} \label{exp:delta_alpha_delta_alpha}
\Delta \alpha_{1}^{*}=\alpha_{1,1}^{*}-\alpha_{1,0}^{*}, \ \Delta \alpha_{2}^*=\alpha_{2,1}^*-\alpha_{2,0}^*,
\end{equation}
and where $\alpha_{i,j}^{*}$ is defined analogously to (\ref{alpha}) for the same states $s_{1}$ and $s_{2}$ defined there, and expressed across the three cases of the battery states similar to (\ref{For_A_1}), (\ref{For_A_2}), and (\ref{For_A_3}), all with $h$ replaced by $h^{*}$. Then $\Delta \alpha_{2}^{*} - \Delta \alpha_{1}^{*}$ is computed for these cases as follows.
For $\boldsymbol{B=0}$ 
\begin{equation} \label{exp:diff_Delta_For_A_B0}
\Delta \alpha_{2}^{*} - \Delta \alpha_{1}^{*} =\underbrace{w (h^{*}_{(\textcolor{black}{A_{1}^{\text{tr}}},1)}-h^{*}_{(\textcolor{black}{A_{2}^{\text{tr}}},1)})}_{-},
\end{equation}
for $\boldsymbol{0 < B < B_m}$
\begin{align*}
\Delta \alpha_{2}^{*} - \Delta \alpha_{1}^{*} =&\underbrace{w (h^{*}_{(\textcolor{black}{A_{1}^{\text{tr}}},B+1)}-h^{*}_{(\textcolor{black}{A_{2}^{\text{tr}}},B+1)})}_{-}\\
+&\underbrace{x (h^{*}_{(\textcolor{black}{A_{1}^{\text{tr}}},B)}-h^{*}_{(\textcolor{black}{A_{2}^{\text{tr}}},B)})}_{-}, \numberthis \label{exp:diff_Delta_For_A_0BBm}
\end{align*}
and for $\boldsymbol{B=B_m}$
\begin{equation} \label{exp:diff_Delta_For_A_Bm}
\Delta \alpha_{2}^{*} - \Delta \alpha_{1}^{*} =\underbrace{\mu (h^{*}_{(\textcolor{black}{A_{1}^{\text{tr}}},B_m)} -h^{*}_{(\textcolor{black}{A_{2}^{\text{tr}}},B_m)})}_{-}.
\end{equation}

Since $A_{1}\leq A_{2}$, by Lemma \ref{lemma:increasing}\ref{lemma:increasing_A}, we have $\Delta \alpha_{2}^{*} - \Delta \alpha_{1}^{*} \leq 0$ for all three expressions (\ref{exp:diff_Delta_For_A_B0}), (\ref{exp:diff_Delta_For_A_0BBm}), and (\ref{exp:diff_Delta_For_A_Bm}). Therefore, the optimal policy exhibits a threshold-based structure w.r.t. $A$, and (\ref{structure_A}) follows.
\hfill \qed

\subsection{\textcolor{black}{Discussion on Remark \ref{theo:threshold_B}}} \label{App:threshold_B}
Reconsidering (\ref{theta}) and (\ref{equ:d_s^*}), and following a similar approach to Appendix \ref{App:threshold_A}, to establish the threshold-based structural property of the optimal policy w.r.t. $B$, it must be shown that $\Delta \theta$ is non-increasing w.r.t. $B$. Then, (\ref{structure_B}) would follow directly.

To show that $\Delta \theta$ is non-increasing with $B$, we must show that $\Delta \beta_{2}^{*} - \Delta \beta_{1}^{*} \leq 0$, where
\begin{equation} \label{exp:delta_beta_delta_beta}
\Delta \beta_{1}^{*}=\beta_{1,1}^{*}-\beta_{1,0}^{*}, \ \Delta \beta_{2}^*=\beta_{2,1}^*-\beta_{2,0}^*,
\end{equation}
and where $\beta_{i,j}^{*}$ is defined similarly to (\ref{beta}) with the same $s_{1}$ and $s_{2}$ defined there, and expressed similarly as (\ref{For_B_1}), (\ref{For_B_2}), (\ref{For_B_3})\textcolor{black}{, and (\ref{For_B_4})} across four relative battery level cases, all with $h$ replaced by $h^{*}$.
For $\boldsymbol{B_1=0, \ 0<B_2<B_m}$
\begin{align*}
&\Delta \beta_{2}^{*} - \Delta \beta_{1}^{*}=\underbrace{w (h^{*}_{(\textcolor{black}{A^{\text{tr}}},1)}-h^{*}_{(\textcolor{black}{A^{\text{tr}}},B_2+1)})}_{+}\\
&+\underbrace{x (h^{*}_{(1,B_2-1)} -  h^{*}_{(\textcolor{black}{A^{\text{tr}}},B_2)})}_{?} +\underbrace{w (h^{*}_{(1,B_2)} -  h^{*}_{(1,0)})}_{-}, \numberthis  \label{exp:diff_Delta_For_B_1}
\end{align*}
for $\boldsymbol{B_1=0, \ B_2=B_m}$
\begin{align*}
&\Delta \beta_{2}^{*} - \Delta \beta_{1}^{*}= \underbrace{w (h^{*}_{(\textcolor{black}{A^{\text{tr}}},1)}-h^{*}_{(\textcolor{black}{A^{\text{tr}}},B_m)})}_{+}\\
&+\underbrace{x (h^{*}_{(1,B_m-1)} - h^{*}_{(\textcolor{black}{A^{\text{tr}}},B_m)})}_{?} + \underbrace{w (h^{*}_{(1,B_m)}- h^{*}_{(1,0)})}_{-}, \numberthis  \label{exp:diff_Delta_For_B_2}
\end{align*}
for $\boldsymbol{0<B_1<B_m, \ 0<B_2<B_m}$
\begin{align*}
&\Delta \beta_{2}^{*} - \Delta \beta_{1}^{*}=\\
&\underbrace{w (h^{*}_{(\textcolor{black}{A^{\text{tr}}},B_1+1)} - h^{*}_{(\textcolor{black}{A^{\text{tr}}},B_2+1)})}_{+} + \underbrace{x (h^{*}_{(\textcolor{black}{A^{\text{tr}}},B_1)}-  h^{*}_{(\textcolor{black}{A^{\text{tr}}},B_2)})}_{+}\\
&+ \underbrace{w (h^{*}_{(1,B_2)} - h^{*}_{(1,B_1)})}_{-} + \underbrace{x (h^{*}_{(1,B_2-1)} -  h^{*}_{(1,B_1-1)})}_{-}, \numberthis  \label{exp:diff_Delta_For_B_3} 
\end{align*}
and for $\boldsymbol{0<B_1<B_m, \ B_2=B_m}$
\begin{align*}
&\Delta \beta_{2}^{*} - \Delta \beta_{1}^{*}=\\
&\underbrace{w (h^{*}_{(\textcolor{black}{A^{\text{tr}}},B_1+1)} - h^{*}_{(\textcolor{black}{A^{\text{tr}}},B_m)})}_{+} + \underbrace{x (h^{*}_{(\textcolor{black}{A^{\text{tr}}},B_1)}-h^{*}_{(\textcolor{black}{A^{\text{tr}}},B_m)})}_{+}\\
&+ \underbrace{w (h^{*}_{(1,B_m)} - h^{*}_{(1,B_1)})}_{-} + \underbrace{x (h^{*}_{(1,B_m-1)} -  h^{*}_{(1,B_1-1)})}_{-}. \numberthis  \label{exp:diff_Delta_For_B_4}
\end{align*}
Given Lemmas \ref{lemma:increasing} and \ref{lemma:chain_inequality}, together with any other available mathematical relations and logical arguments, we are unable to establish that $\Delta \beta_{2}^{*} - \Delta \beta_{1}^{*} \leq 0$ holds for the expressions in (\ref{exp:diff_Delta_For_B_1}), (\ref{exp:diff_Delta_For_B_2}), (\ref{exp:diff_Delta_For_B_3}), \textcolor{black}{and (\ref{exp:diff_Delta_For_B_4})}, regardless of how the terms are arranged or grouped.\footnote{However, numerically we observe that $\Delta \theta$ is non-increasing in the second, third, and fourth cases. In the first case, it is occasionally non-monotonic; nevertheless, it does not change sign within that range, and the resulting policy still exhibits a threshold-based structure.}

\section{\textcolor{black}{Average AoA under AoI-Opt. (Greedy) Policy}} \label{App:Randomized_Policy_Proof}
Based on the expressions described in Remark \ref{remark:AoI_Opt_Dynamics}, a transition probability matrix, denoted by $\mathbf{T}$, can be obtained.
We denote the stationary distribution of the Markov chain by the row vector $\mathbf{V}$, where each element $V_{a,b}$ corresponds to the stationary probability of state $s_{(a=A,b=B)}$, with\footnote{\textcolor{black}{Here, since we work with Markov equations, we do not need to truncate the age, and we use the original range $A(k) \in \mathbb{N}$.}} $A(k) \in \mathbb{N}$ and $B(k) \in \{0,1,2,\dots,B_m\}$.
Also, $V_{A_{a}}=\sum_{b=0}^{B_{m}}V_{a,b}$ and $V_{B_{b}}=\sum_{a=1}^{\infty}V_{a,b}$ are the marginal probabilities for all the states with a certain AoA ($a$) and a certain battery level ($b$), respectively. 
Setting $\mathbf{V} \mathbf{T} = \mathbf{V}$, we can write the Markov equations
\begin{subequations} \label{RP_Markov_equs_4}
\begin{align}
V_{1,b}&=wV_{B_{b}}+xV_{B_{b+1}}, \ 0 \leq b \leq B_{m}, \label{RP_Markov_equs_4_a}  \\
V_{a,0}&=\bar{\lambda}V_{a-1,0}, \ a \geq 2, \label{RP_Markov_equs_4_b}\\
V_{a,b}&=yV_{a-1,b-1}+zV_{a-1,b}, \ a \geq 2,  1 \leq b \leq B_{m}-1, \label{RP_Markov_equs_4_c}\\
V_{a,\textcolor{black}{B_{m}}}&=\bar{\mu}V_{a-1,\textcolor{black}{B_{m}}}+yV_{a-1,\textcolor{black}{B_{m}}-1}, \ a \geq 2. \label{RP_Markov_equs_4_d}
\end{align}  
\end{subequations}

Summing $V_{1,b}$ in (\ref{RP_Markov_equs_4_a}) and $V_{a,b}$ in (\ref{RP_Markov_equs_4_b}), (\ref{RP_Markov_equs_4_c}), and (\ref{RP_Markov_equs_4_d}) over all $b$, we get
\begin{subequations} \label{equ:V_{A_{1}}_and_equ:V_{A_{2345...}}}
\begin{align}  
&V_{A_{1}}+xV_{B_{0}}=\mu, \label{equ:V_{A_{1}}}\\
&V_{A_{a}}=\bar{\mu} V_{A_{a-1}} +x V_{a-1,0}, \ a \geq 2. \label{equ:V_{A_{2345...}}}
\end{align}
\end{subequations}

Summing $V_{a,0}$, $V_{a,b}$ for $1 \leq b \leq B_{m}-1$, and $V_{a,\textcolor{black}{B_{m}}}$ in (\ref{RP_Markov_equs_4}) over all $a$, we get
\begin{subequations} \label{V_10_B_0_2_and_equ:V_{B_{0}}_2_2_and_equ:V_{B_{1}}_2_and_equ:V_{B_{m}}_2}
\begin{align}
V_{1,0}&=\lambda V_{B_{0}}, \label{V_10_B_0_2}\\
yV_{B_{0}}&=xV_{B_{1}}, \label{equ:V_{B_{0}_2}}\\
(x+y)V_{B_{b}}&=yV_{B_{b-1}}+xV_{B_{b+1}}, \  1 \leq  b \leq B_{m}-1, \label{equ:V_{B_{1}}_2}\\
xV_{B_{m}}&=y\textcolor{black}{V_{{B_{m}-1}}}. \label{equ:V_{B_{m}}_2}
\end{align}
\end{subequations}
By substituting $V_{B_{b}}$ terms into one another and using $\sum_{b=0}^{B_{m}}V_{B_{b}}=1$, we obtain
\begin{subequations} \label{equ:B_0_and_equ:B_b}
\begin{align}
V_{B_{0}}&=\textcolor{black}{\frac{1-r}{1-r^{B_{m}+1}}}, \label{equ:B_0} \\ 
V_{B_{b}}&= \frac{1-r}{1-r^{B_{m}+1}}r^{b}, \ b \in \{0,\cdots,\textcolor{black}{B_{m}}\}, \label{equ:B_b}
\end{align}
\end{subequations}
where $r=y/x$, \textcolor{black}{for $\mu \neq \lambda$. For $\mu=\lambda$, we have $r=1$, so all battery levels are equally likely and equal to
\begin{equation} \label{equ:B_0_singularity}
V_{B_{b}}=1/(B_{m}+1) , \ b \in \{0,\cdots,B_{m}\}.
\end{equation}}
\textcolor{black}{Note that $V_{B_{0}}=\bar{B}_{0}^{*,I}$.} From (\ref{equ:V_{A_{1}}_and_equ:V_{A_{2345...}}}), and (\ref{V_10_B_0_2}), we get 
\begin{equation} \label{equ:V_{A_{a}}_2}
V_{A_{a}}=(\mu-xV_{B_{0}})(\bar{\mu})^{a-1} +  \lambda x V_{B_{0}} \sum_{i=0}^{a-2} (\bar{\mu})^{a-2-i} \bar{\lambda}^{i}, \ a \geq 1.
\end{equation}
Then the average AoA is
\begin{subequations} \label{equ:Average_AoA_MC_App_General}
\begin{align}
\bar{A}^{*,I}=&\sum_{a=1}^{\infty}aV_{A_{a}}, \label{equ:Average_AoA_MC_App_a}\\
=&\frac{1}{\mu} - \frac{xV_{B_{0}}}{\mu^{2}} + x V_{1,0} \left( \frac{2}{\mu \lambda} + \frac{\bar{\lambda}}{\mu \lambda^{2} } +  \frac{\bar{\mu}}{\mu^{2} \lambda} \right), \label{equ:Average_AoA_MC_App_b}\\
=&\frac{1}{\mu} + \frac{\bar{\lambda}}{\lambda} V_{B_{0}}. \label{equ:Average_AoA_MC_App_c}
\end{align}
\end{subequations}
\textcolor{black}{Then for $\mu \neq \lambda$, from (\ref{equ:B_0}), (\ref{equ:Average_AoA_MC_App_c}) becomes
\begin{subequations} \label{equ:Average_AoA_MC_App_ForInequality}
\begin{align}
=&\frac{\mu x^{B_{m}+1}-\lambda y^{B_{m}+1}}{\mu\lambda\left(x^{B_{m}+1}-y^{B_{m}+1}\right)}, \label{equ:Average_AoA_MC_App_ForInequality_d}\\
=&\frac{\mu^{B_{m}+2}\left( 1- \lambda \right)^{B_{m}+1}-\left( 1- \mu \right)^{B_{m}+1} \lambda^{B_{m}+2}}{\mu^{B_{m}+2}\lambda\left( 1- \lambda \right)^{B_{m}+1}-\mu\left(1-\mu\right)^{B_{m}+1} \lambda^{\textcolor{black}{B_{m}}+2}}.  \label{equ:Average_AoA_MC_App_ForInequality_e}
\end{align}
\end{subequations}
And for $\mu=\lambda$, from (\ref{equ:B_0_singularity}), (\ref{equ:Average_AoA_MC_App_c}) becomes}
\begin{equation} \label{equ:Average_AoA_MC_App_ForEquality}
\bar{A}^{*,I}=\frac{2-\mu+B_{m}}{\mu+\mu B_{m}}.
\end{equation}
Thus, $\bar{A}^{*,I}$ equals (\ref{equ:Average_AoA_MC_App_ForInequality_e}) and (\ref{equ:Average_AoA_MC_App_ForEquality}), for $\mu \neq \lambda$ and $\mu=\lambda$, respectively.
\hfill \qed

\end{appendices}

\bibliographystyle{ieeetr}
\bibliography{bibliography}

\end{document}